\newif\ifblog
\newif\iftex
\def\P{{\mathbb P}}
\def\E{{\mathbb E}}
\def\R{{\mathbb R}}
\def\N{{\mathbb N}}
\def\R{{\mathbb R}}
\def\Y{{\mathcal Y}}
\def\X{{\mathcal X}}
\def\M{{\mathcal{M}}}
\newcommand{\bed}{\begin{displaymath}}
\newcommand{\eed}{\end{displaymath}}
\newcommand{\bea}{\bed\begin{array}{rl}}
	\newcommand{\eea}{\end{array}\eed}
\newcommand{\barray}{\begin{array}{ll}}
	\newcommand{\earray}{\end{array}}
\providecommand{\ind}{\mathbbm{1}}
\providecommand{\half}{\ensuremath{\frac{1}{2}}}
\newcommand*{\defeq }{\mathrel{\vcenter{\baselineskip0.5ex \lineskiplimit0pt
			\hbox{\scriptsize.}\hbox{\scriptsize.}}}%
	=}
\newtheorem{theorem}{Theorem}[section]
\newtheorem{proposition}[theorem]{Proposition}
\newtheorem{remark}[theorem]{Remark}
\newcommand{\LLL}{\mathbb{L}}
\title{S-shaped Utility Maximization with VaR Constraint and Partial Information}
\author{Dongmei Zhu\footnote{School of Economics and Management, Southeast University, Nangjing, China.
zhudongmei@seu.edu.cn.
Supported  in part by the National Social Science Fund of China (Grant No.24CJY063).}, Ashley Davey\footnote{Department of Mathematics, Imperial College, London SW7 2BZ, UK. ashley.davey18@imperial.ac.uk.}, Harry Zheng\footnote{Department of Mathematics, Imperial College, London SW7 2BZ, UK. h.zheng@imperial.ac.uk. Supported  in part by EPSRC (UK)  grant (EP/V008331/1).}}
\date{}
\begin{document}
\maketitle

\begin{abstract}
We study S-shaped utility maximisation  with VaR constraint and unobservable drift coefficient. Using the Bayesian filter, the concavification principle, and the change of measure, we give a semi-closed integral representation for the dual value function and find a critical wealth level that determines  if  the constrained problem   admits a unique optimal solution and Lagrange multiplier or is infeasible.  We also propose three algorithms (Lagrange, simulation, deep neural network) to solve the problem and compare their performances with numerical examples. 
\end{abstract}

 \noindent {\bf Keywords.}
S-shaped utility maximization, VaR constraint, partial information, Bayesian filter, dual control, physics informed  neural network.

 \noindent {\bf 2020 Mathematics Subject Classification.}
93E20, 93E11, 91G80, 90C46, 49M29


\section{Introduction} \label{sec_intro}
 \setcounter{equation}{0}
Optimal portfolio via expected utility maximization has been extensively studied, see  \cite{pham09} for expositions. The S-shaped utility has drawn particularly great attention since the ground-breaking work of  
 \cite{KT} on  the prospect theory.  \cite{car00} is the first in  solving S-shaped utility maximization with the concavification principle in a complete market. There are many papers in the literature on the subject, for example, 
\cite{bkp} incorporate prospect theory and derive closed-form solutions for optimal portfolio choice under loss aversion;
 \cite{JZ} study a general continuous-time behavioural portfolio selection model with S-shaped utility  and probability distortion;
\cite{ing13} discuss realization utility with reference-dependent preferences.

S-shaped utility maximization may lead to extreme loss due to its risk-seeking nature in the loss region. To mitigate this risk, one may incorporate some risk measures into the model.  The most common one is the value at risk (VaR), defined as the maximum portfolio loss that may occur during a given period with a pre-set confidence level, which satisfies the regulatory  requirements. There are also many papers in the literature on the subject, for example, 
 \cite{basak01} embed the VaR  into a utility maximization framework and study its implication for optimal portfolio policies. 
 \cite{yiu04} imposes the VaR as a dynamic constraint and derive the optimal constrained portfolio allocation by dynamic programming technique.
 \cite{chen18} focus on a utility maximization problem under multiple VaR-type constraints and their effects.
\cite{ben22} discuss a Merton problem with an additional variance of terminal wealth term in objective function, leading to a time-inconsistent problem.
 \cite{dong20} study   S-shaped utility maximization  with a VaR constraint and solve the  problem by  the dual control method.

Aforementioned papers assume fully observable models with deterministic coefficients or observable random coefficients. 
In real financial markets investors often can only observe partial information of risky assets, not  full information needed for valuation and optimization, for example, stock price processes but not stock growth rates,  or equity values but not  firm values, etc. 
To circumvent these issues, the filtering theory (see \cite{Bain2009}) is normally used to extract the information of unobservable random parameters with observable  information,  see \cite{det86,kar91} for introduction of this field in asset pricing and utility maximization. 
There are three typical models for unobservable random parameters, including linear diffusion models,  leading to the Kalman filter, see \cite{bren06}; finite state Markov chain models, leading to the Wonham filter, see \cite{rie05,sas07}, and random vector models with prior distribution,  leading to the Bayesian filter, see \cite{fran19, eks16}. 

In this paper we extend the work of   \cite{dong20} to models with partial information. Specifically, we assume the drift coefficient of the risky asset is an unobservable random variable with some prior distribution.  Using the Bayesian filter, we can transform the model into an equivalent fully observable one, which results in an additional filtered state process.  The dual control approach in   \cite{dong20} no longer applies as   the joint distribution of the dual and filtered state processes is unknown, which is  in sharp contrast to \cite{dong20} where   only the distribution of the dual state process is needed and is known to be lognormally distributed, so one can easily compute the dual value function or its integral representation, then find the primal 
value function and optimal control with the primal-dual relation.  With an additional filtered state process, the distribution of the dual process depends on that of the filtered state process and is in general unknown, which makes difficult to express the dual value function in semi-closed integral form. 

To  overcome the difficulty, we use a measure change to 
 reduce the dimension of the dual state variables by one when  the prior distribution of unobservable drift coefficient is a discrete distribution with two states and then characterize the dual value function with a semi-closed integral representation. 
 We find a critical wealth level that determines  if  the S-shaped utility maximization with VaR constraint and partial information  admits a unique optimal solution and Lagrange multiplier or is infeasible, that is, the VaR constraint is not satisfied for any admissible control strategies. We give a constructive proof of our main result, Theorem \ref{thm_feasible}, which leads to an exact algorithm, called Lagrange algorithm, to solve the problem numerically. We also propose two other algorithms to solve the dual problem numerically, one is 
 Monte Carlo simulation as both dual and filtered state processes can be easily simulated, albeit their joint distribution is unknown, 
 the other is   the Physics-Informed Neural Network (PINN) method (see \cite{raissi19}) that approximates the dual value function with a neural network and uses the dual HJB equation as  a loss function.   Deep learning has been used for solving HJB equations for various stochastic control problems without additional constraints, see for example (\cite{davey22, han18, wang22}).  
 We extend the scope of these papers with an input parameter representing a Lagrange multiplier to solve control problems with additional VaR constraint.

The rest of the paper is organized as follows. In Section \ref{model} we formulate the S-shaped utility maximization problem with VaR constraint and unobservable drift coefficient and discuss the Bayesian filter and the dual formulation. In Section \ref{sec_main} we state the main result of the paper, Theorem \ref{thm_feasible}, that shows there is a critical wealth level for the existence of optimal solution and characterize the optimal terminal wealth and the corresponding Lagrange multiplier. In Section \ref{sec_algo} we propose three algorithms (Lagrange, simulation, PINN) for solving the problem. In Section \ref{sec_numerics} we provide numerical examples with our algorithms. Section \ref{sec_conc} concludes. \

\section{Model and Equivalent Problem} \label{model}
\setcounter{equation}{0}
Let $(\Omega,\mathbb{F},\cal{F},\mathbb{P})$ be a filtered probability space, where $\mathbb{P}$ is the probability measure and the filtration  $\mathbb{F}=\{{\cal{F}}_t,t\in[0,T]\}$ satisfies the usual conditions.
The market consists of one riskless asset $S_0$
and one risky asset $S$, satisfying, for $0\leq t\leq T$,  
\begin{eqnarray*}
dS_0(t)&=&r S_0(t)dt,\nonumber \\
dS(t)&=&\mu S(t) dt+\sigma S(t) dW(t), 
\end{eqnarray*}
where $\{W(t),t\in[0,T]\}$ is a standard Brownian motion, adapted to the filtration $\mathbb{F}$, $r$ and $\sigma$ are positive constants, and $\mu$ is a  ${\cal F}_0$ measurable random variable. We assume $\mu$ and $W$ are unobservable and independent of each other.  
Let $\mathbb{F}^S=\{{\cal{F}}_t^S,t\in[0,T]\}$ be the natural filtration generated by the risky asset $S$, augmented with all $\mathbb{P}$-null sets in ${\cal{F}}$. The filtration $\mathbb{F}^S$ is observable and $\mathbb{F}^S\subset \mathbb{F}$. We further assume that random variable $\mu$ takes two values $\mu^h$ and $\mu^l$ with probability $p$ and $1-p$. To avoid triviality, we assume $\mu^l<\mu^h$ and $p\in (0,1)$. There is only one risky asset in the market for simplicity, which can be easily generalized to multiple risky assets with correlated Brownian motions.

Let $\pi(t)$ be  the proportion of wealth invested in the risky asset at time $t$, then the wealth process $X$ satisfies the following stochastic differential equation (SDE): 
\begin{equation}\label{dx}
	dX(t)=X(t)(r+\pi(t)(\mu-r))dt+X(t)\pi(t)\sigma dW(t),\  X(0)=x_0, \quad 0\leq t\leq T,
\end{equation}
where  $\pi$ is ${\cal{F}}^S$-progressively measurable and satisfies  $\mathbb{E}\left[\int_0^T|\pi(t)|^2dt\right]<\infty$, called an {\it admissible control}. The set of all admissible controls  is denoted by ${\cal{A}}$.
We consider a general S-shaped utility function  given by 
$$
	U(x)=\left\{
	\begin{array}{lll}
		-\infty,\ x<0,\\
		-U_2(\theta-x),\ 0\leq x<\theta, \\
	    U_1(x-\theta),\ x\geq \theta,				
	\end{array}
	\right.
$$
where $U_1, U_2$ are  strictly increasing, strictly concave, continuously differentiable  with $U_1(0)=U_2(0)=0$, and $\theta$ is a positive constant.
Additionally, $\lim_{x\rightarrow +\infty}U_1(x)=+\infty$, $\lim_{x\rightarrow +\infty}U'_1(x)=0$, $\lim_{x\rightarrow +\infty}\frac{xU'_1(x)}{U_1(x)}<1$ and  $\lim_{x\rightarrow 0^+}U'_i(x)=+\infty$, for $i=1,2$.
In what follows, $I_i$ denotes the inverse function of $U'_i$ for $i=1,2$.
$U$ is convex for $0\leq x\leq \theta$ and concave for $x\geq \theta$, indicating the behavioral change from risk seeking to risk averse at a reference point $\theta$.  
Our problem is to maximize the expected utility of  terminal wealth with a quantile constraint: 
\begin{equation}\label{utility}
	\left\{
	\begin{array}{l}
	\sup_{\pi\in {\cal{A}}}\mathbb{E}[U(X^{\pi}(T)))],\\
	{\text{s.t.}}\ X^{\pi}(t)\ {\text{satisfies}}\ (\ref{dx}),\\
	\mathbb{P}(X^{\pi}(T)\geq L)\geq 1-\varepsilon,
	\end{array}
	\right.
	\end{equation}
where $0\leq \varepsilon\leq 1$ is a constant given in advance.
In this paper, we assume $L<\theta$. The case $L\geq \theta$ can be similarly discussed.

\subsection{ Filtering and equivalent formulation} \label{sec_filter}
We now transform the primal problem (\ref{utility}) into an equivalent completely observable  problem. 
Denote the filter estimate of $\mu$ by 
$$\hat{\mu}(t)=\mathbb{E}[\mu|{\cal{F}}_t^S]$$
 and the innovation process $\hat W$ by
\[
\hat{W}(t)\defeq \sigma^{-1}\int_0^t(\mu-\hat{\mu}(s))ds+W(t),\quad  t\in[0,T].
\]
Then $\hat W$ is a $(\mathbb{P},{\mathbb{F}}^S)$-Brownian motion (see \cite{sas07}).
We can rewrite equivalently the asset price process $S$ as
\begin{equation}\label{sn}
dS(t)=\hat{\mu}(t)S(t)dt+\sigma S(t) d\hat{W}(t),
\end{equation}
and the wealth process $X$ as 
\begin{equation}\label{xn}
dX(t)=X(t)(r+\pi(t)(\hat{\mu}(t)-r))dt+X(t)\pi(t)\sigma d\hat{W}(t).
\end{equation}
The filtered drift process $\hat{\mu}$ satisfies the SDE:
\begin{equation}\label{mu}
	d\hat{\mu}(t)=\psi(\hat\mu(t))d\hat{W}(t),
	\end{equation}
where $\psi(u)=\sigma^{-1}(u-\mu^l)(\mu^h-u)$, and $\hat\mu(0)=\mathbb{E}[\mu] =p\mu^h+(1-p)\mu^l\in (\mu^l,\mu^h)$. 

We now have a fully observed control problem with state processes $X,\hat{\mu}$ and the problem (\ref{utility}) is equivalent to the following problem:
\begin{equation}\label{uo}
\left\{
\begin{array}{l}
\sup_{\pi\in {\cal{A}}}\mathbb{E}[U(X^{\pi}(T))],\\
{\text{s.t.}}\ X^{\pi}(t)\ {\text{satisfies}}\ (\ref{xn}),\\
{\hat \mu}(t) \ {\text{satisfies}}\ (\ref{mu}),\\
\mathbb{P}(X^{\pi}(T)\geq L)\geq 1-\varepsilon,
\end{array}
\right.
\end{equation}
We can solve problem (\ref{uo}) in two steps: First,  solve an unconstrained problem:
\begin{equation}\label{util}
\left\{
\begin{array}{l}
\sup_{\pi\in {\cal{A}}}\mathbb{E}[U_{\lambda}(X^{\pi}(T))],\\
{\text{s.t.}}\ X^{\pi}(t)\ {\text{satisfies}}\ (\ref{xn}),\\
{\hat \mu}(t) \ {\text{satisfies}}\ (\ref{mu}).\\
\end{array}
\right.
\end{equation}
where 
$$U_{\lambda}(x)\defeq U(x)+\lambda {\mathbb{I}}_{\{x\geq L\}}$$
and $\lambda\geq 0$ is a Lagrange multiplier to be determined. Second, 
find $\lambda^*$ such that the quantile constraint and the complementary slackness condition are satisfied:
\begin{equation}\label{lamd}
\left\{
\begin{array}{l}
\mathbb{P}(X^{\pi^*(\lambda),\lambda}(T)\geq L)\geq 1-\varepsilon,\\
\lambda(\mathbb{P}(X^{\pi^*(\lambda),\lambda}(T)\geq L)- 1+\varepsilon)=0.\\
\end{array}
\right.
\end{equation}
The relation of problems (\ref{uo}) and  (\ref{util})  is discussed in 
	\cite[Lemma 2.3]{dong20} that shows  if there exists a nonnegative constant $\lambda^*$ such that $X^{\pi^*,\lambda^*}(T)$ solves problem (\ref{util}) and satisfies condition (\ref{lamd}), then it also solves problem (\ref{uo}).

\subsection{Concavified utility and dual problem}
The utility function $U_\lambda$ in (\ref{util}) is discontinuous at $x=L$ as well as nonconcave. We can use the concavification technique (see \cite{car00}) to solve the unconstrained problem (\ref{util}) as both state processes $X$ and $\hat \mu$ are driven by the same Brownian motion $\hat W$ and we have a complete market model. 
The concave envelope of $U$ is given by  
\begin{equation}\label{uc}
U^c(x)=\left\{
\begin{array}{ll}
-\infty, &x<0,\\
c_z x-U_2(\theta), & 0\leq x<z,\\
U_1(x-\theta), &x\geq z,
\end{array}
\right.
\end{equation}
where 
$$c_x=U'_1(x-\theta),\ x>\theta$$
 and $z>\theta$ is the unique solution to the equation
\begin{equation}\label{z}
U_1(x-\theta)+U_2(\theta)-xU'_1(x-\theta)=0.
\end{equation}
For a fixed $\lambda\geq 0$, denote by $U_{\lambda}^c$ the concave envelope of $U_{\lambda}$ and $V^c_{\lambda}$ the dual function of $U_{\lambda}^c$, defined by
$$V^c_{\lambda}(y)=\sup_{x\geq 0}\{U^c_{\lambda}(x)-xy\},\quad y>0,$$
and $x^{*,\lambda}(y)$ the maximizer of $V^c_{\lambda}(y)$. 
We can characterize $U_{\lambda}^c$ and $x^{*,\lambda}(y)$ as follows:
\begin{proposition}\label{dong}
	(\cite{dong20}) 
	Let 
\begin{equation}\label{kl}
k_{\lambda}\defeq \frac{U_2(\theta)  - U_2(\theta - L) + \lambda}{L},
\end{equation} 	
and 	 $\tilde{z}\in (\theta,z)$ is the unique solution to the equation
	\begin{equation}\label{zt}
	U_1(x-\theta)+U_2(\theta-L)-(x-L)U'_1(x-\theta)=0.
	\end{equation}
	(1) If $k_{\lambda}>c_{\tilde{z}}$, then
	\begin{equation}\label{r3-1-1}
	U_{\lambda}^c(x)=\left\{
	\begin{array}{ll}
	-\infty,& x<0,\\
	k_{\lambda}x-U_2(\theta),&0\leq x<L,\\
	c_{\tilde{z}}(x-L)-U_2(\theta - L)+\lambda,& L\leq x<\tilde{z},\\
	U_1(x-\theta)+\lambda,&x\geq \tilde{z},
	\end{array}\right.
	\end{equation}
	and 
	\begin{equation}\label{r3-1-2}
	x^{*,\lambda}(y)=\left\{
	\begin{array}{ll}
	\theta+I_1(y),&y<c_{\tilde{z}},\\
	L,&c_{\tilde{z}}\leq y<k_{\lambda},\\
	0,&y\geq k_{\lambda}.
	\end{array}
	\right.
	\end{equation}
	\\
	(2) If $k_{\lambda}\leq c_{\tilde{z}}$, then 
		\begin{equation}\label{r2-1-1}
		U_{\lambda}^c(x)=\left\{
		\begin{array}{ll}
		-\infty,& x<0,\\
		c_{z_0}x-U_2(\theta),&0\leq x<\tilde{z}_0,\\
		U_1(x-\theta)+\lambda,&x\geq \tilde{z}_0,
		\end{array}\right.
		\end{equation}
		and
		\begin{equation}\label{r2-2-l}
		x^{*,\lambda}(y)=\left\{
		\begin{array}{ll}
		\theta+I_1(y),&y<c_{\tilde{z}_0},\\
		0,&y\geq c_{\tilde{z}_0},
		\end{array}
		\right.
		\end{equation}
    where $\tilde{z}_0\in [\tilde{z},z]$ is the unique solution to the equation
    \begin{equation}\label{z0}
    U_1(x-\theta)+U_2(\theta)+\lambda-xU'_1(x-\theta)=0.
    \end{equation} 
     If $\lambda=0$, then $\tilde{z}_0=z$, $\tilde{U}_{\lambda}^c(x)$ and $x^{\lambda,0}(y)$ are given by (\ref{uc}) and (\ref{r2-2-l}) respectively.           
	\end{proposition}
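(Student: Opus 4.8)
The plan is to prove Proposition~\ref{dong} by constructing the concave envelope $U_\lambda^c$ directly and then reading off $x^{*,\lambda}$ from its superdifferential. Recall that $U_\lambda^c$ is the smallest concave function on $[0,\infty)$ dominating $U_\lambda$ (and equal to $-\infty$ on $(-\infty,0)$), equivalently the upper boundary of the closed convex hull of the hypograph of $U_\lambda$. The first step is to record the shape of $U_\lambda$: it is strictly convex and increasing on $[0,L)$ and again on $[L,\theta)$ (since $x\mapsto -U_2(\theta-x)$ is convex increasing, $U_2$ being concave increasing), it has an upward jump of height $\lambda$ at $x=L$, it is continuous at $\theta$ with value $\lambda$, and on $[\theta,\infty)$ it equals $U_1(\cdot-\theta)+\lambda$, which is strictly concave and increasing with slope $+\infty$ at $\theta^+$ and slope $\to 0$ at $+\infty$. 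The standing growth conditions on $U_1$ (in particular $\lim_{x\to\infty}xU_1'(x)/U_1(x)<1$ together with $U_1(x)\to\infty$) are exactly what makes the auxiliary functions below strictly monotone and guarantees the relevant tangency points exist and are unique.

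Because $U_\lambda$ is convex on all of $[0,\theta)$, no interior point of that interval can be a contact point of the envelope (a strictly convex function cannot meet a concave majorant at an interior point from below), so $U_\lambda^c$ must consist, on an initial interval, of at most two affine pieces emanating from the left endpoint $(0,-U_2(\theta))$ and then coincide with $U_1(\cdot-\theta)+\lambda$ beyond a single tangency point; the only extra candidate extreme point is the top of the jump, $(L,-U_2(\theta-L)+\lambda)=(L,U_\lambda(L))$. I would introduce the chord slope $k_\lambda$ of $(0,-U_2(\theta))\to(L,U_\lambda(L))$, which is (\ref{kl}), and the slope $c_{\tilde z}=U_1'(\tilde z-\theta)$ of the tangent line from $(L,U_\lambda(L))$ to the curve $U_1(\cdot-\theta)+\lambda$; a short intermediate-value and monotonicity argument (using $L<\theta<z$ and the convexity of $U$ on $[0,\theta)$, which forces the concave envelope $U^c$ of $U$ to lie strictly above $U$ at $L$) shows that the tangency point $\tilde z$ exists, is unique, lies in $(\theta,z)$, and is characterised by (\ref{zt}). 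The dichotomy is then: $(L,U_\lambda(L))$ lies on the upper boundary of the convex hull if and only if $k_\lambda\ge c_{\tilde z}$, which produces cases (1) and (2).

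In the case $k_\lambda>c_{\tilde z}$ I would propose $U_\lambda^c$ as in (\ref{r3-1-1}) and verify three things. Concavity is immediate since its slopes $k_\lambda>c_{\tilde z}=U_1'(\tilde z-\theta)>U_1'(\cdot-\theta)$ are non-increasing. Domination $U_\lambda^c\ge U_\lambda$ holds on $[0,L)$ because the first affine piece passes through $(0,-U_2(\theta))$ and a point above $(L,-U_2(\theta-L))$ while $-U_2(\theta-\cdot)$ is convex there, on $[\theta,\tilde z)$ because a tangent to a concave function lies above it, and on $[L,\theta)$ because $U_\lambda$ is convex there with $U_\lambda(L)$ on and $U_\lambda(\theta^-)=\lambda$ below the tangent line, so $U_\lambda$ stays below its own chord on $[L,\theta]$ and hence below the tangent line. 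Minimality holds since any concave $g\ge U_\lambda$ satisfies $g(0)\ge -U_2(\theta)$, $g(L)\ge U_\lambda(L)$, $g(\tilde z)\ge U_1(\tilde z-\theta)+\lambda$ and $g\ge U_1(\cdot-\theta)+\lambda$ on $[\tilde z,\infty)$, and concavity then forces $g\ge U_\lambda^c$ on each sub-interval. In the case $k_\lambda\le c_{\tilde z}$ the envelope is instead the tangent line from $(0,-U_2(\theta))$ to $U_1(\cdot-\theta)+\lambda$, whose tangency point $\tilde z_0$ solves (\ref{z0}); the same three checks go through, the one new input being the inequality $k_\lambda\le c_{\tilde z_0}$ (equivalently $U_\lambda(L)\le c_{\tilde z_0}L-U_2(\theta)$) needed on $[L,\theta)$. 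To obtain $\tilde z_0\in[\tilde z,z]$ and this inequality, the crux is to work with the auxiliary functions $\phi(x)=U_1(x-\theta)+U_2(\theta)+\lambda-xU_1'(x-\theta)$ and $\psi(x)=U_1(x-\theta)+U_2(\theta-L)-(x-L)U_1'(x-\theta)$, to note $\phi'(x)=-xU_1''(x-\theta)>0$ and $\psi'(x)=-(x-L)U_1''(x-\theta)>0$ so each has a unique zero ($\tilde z_0$ and $\tilde z$ respectively), and to exploit the identity $\phi-\psi=L\,(k_\lambda-U_1'(\cdot-\theta))$; evaluating it at $\tilde z$ and at $z$ and using $\lambda\ge 0$ and the case assumption yields $\tilde z\le\tilde z_0\le z$, with $\tilde z_0=z$ when $\lambda=0$ so that (\ref{r2-1-1}) collapses to (\ref{uc}).

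Finally, $x^{*,\lambda}(y)$ is the maximiser of $U_\lambda^c(x)-xy$ over $x\ge 0$; since $U_\lambda^c$ is concave and piecewise smooth, the maximiser is the point whose superdifferential contains $y$. It is $x=0$ when $y$ exceeds the largest slope of $U_\lambda^c$ ($y\ge k_\lambda$ in case (1), $y\ge c_{\tilde z_0}$ in case (2)); it is the kink $x=L$ when $c_{\tilde z}\le y< k_\lambda$ in case (1), because the superdifferential of $U_\lambda^c$ at $L$ is the interval $[c_{\tilde z},k_\lambda]$; and otherwise it solves $U_1'(x-\theta)=y$, i.e. $x=\theta+I_1(y)$. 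Reading off the ranges gives (\ref{r3-1-2}) and (\ref{r2-2-l}). The main obstacle is the case split together with, inside case (2), showing $\tilde z\le\tilde z_0$ so that the jump point $(L,U_\lambda(L))$ genuinely lies below the origin-tangent line: this is precisely where the monotonicity of $\phi,\psi$ (coming from $U_1''<0$) and the identity $\phi-\psi=L(k_\lambda-c_x)$ do the real work. A secondary subtlety is the domination check on the middle interval $[L,\theta)$, where $U_\lambda$ carries the $+\lambda$ shift although on $[0,L)$ it does not, so the argument there must route through $U_\lambda$'s own chord rather than compare slopes directly.
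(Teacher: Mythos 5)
The paper does not prove this proposition at all: it is imported verbatim from \cite{dong20}, so there is no in-paper argument to compare against. Judged on its own, your construction is correct and is the standard concavification argument (the same style of proof as in the cited source): identify the only candidate contact points $(0,-U_2(\theta))$, the top of the jump $(L,U_\lambda(L))$, and a tangency point on the concave branch; split on whether the chord slope $k_\lambda$ exceeds the tangent slope $c_{\tilde z}$; verify concavity, domination and minimality piecewise; and read off $x^{*,\lambda}$ from the superdifferential, with the kink at $L$ carrying $[c_{\tilde z},k_\lambda]$ in case (1). Your auxiliary-function device is sound: $\phi-\psi=L(k_\lambda-c_x)$ with $\phi,\psi$ strictly increasing gives $\phi(\tilde z)=L(k_\lambda-c_{\tilde z})\le 0$ and $\phi(z)=\lambda\ge 0$, hence $\tilde z\le\tilde z_0\le z$ and $\tilde z_0=z$ when $\lambda=0$. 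Two small points to tighten: (i) the inequality $k_\lambda\le c_{\tilde z_0}$ that you need for domination on $[L,\theta)$ in case (2) is not literally obtained by evaluating the identity at $\tilde z$ and $z$; it follows by evaluating it at $\tilde z_0$, where $\phi(\tilde z_0)=0$ gives $k_\lambda-c_{\tilde z_0}=-\psi(\tilde z_0)/L\le 0$ since $\psi(\tilde z_0)\ge\psi(\tilde z)=0$ --- state this explicitly; (ii) the paper only assumes $U_1$ is continuously differentiable, so avoid $U_1''$ and get strict monotonicity of $\phi,\psi$ directly from strict concavity (strict decrease of $U_1'$), which your gradient computation can be rephrased to use. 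Neither point affects the validity of the approach.
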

Figure \ref{fig_utility1}  shows an example of  utilities $U_\lambda$ and their concavified counterparts $U_\lambda^c$. 
\begin{figure}[H] 
\centering
\begin{minipage}{.31\textwidth}
\centering
\begin{subfigure}[b]{\textwidth}
\includegraphics[width=\textwidth]{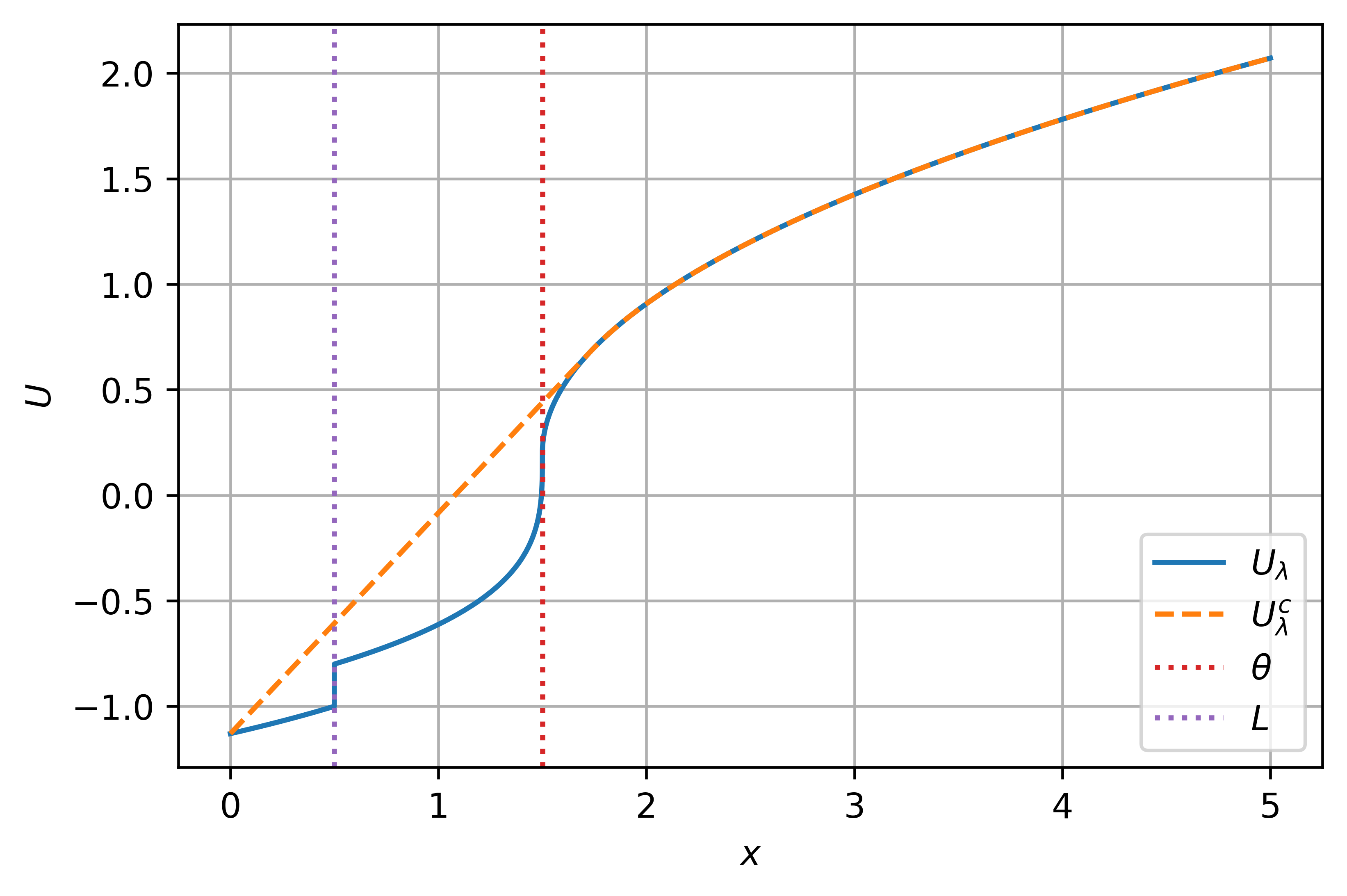}
\caption{One line segment concavified utility}
\end{subfigure}
\end{minipage}%
\;\;
\begin{minipage}{.31\textwidth}
\centering
\centering
\begin{subfigure}[b]{\textwidth}
\includegraphics[width=\textwidth]{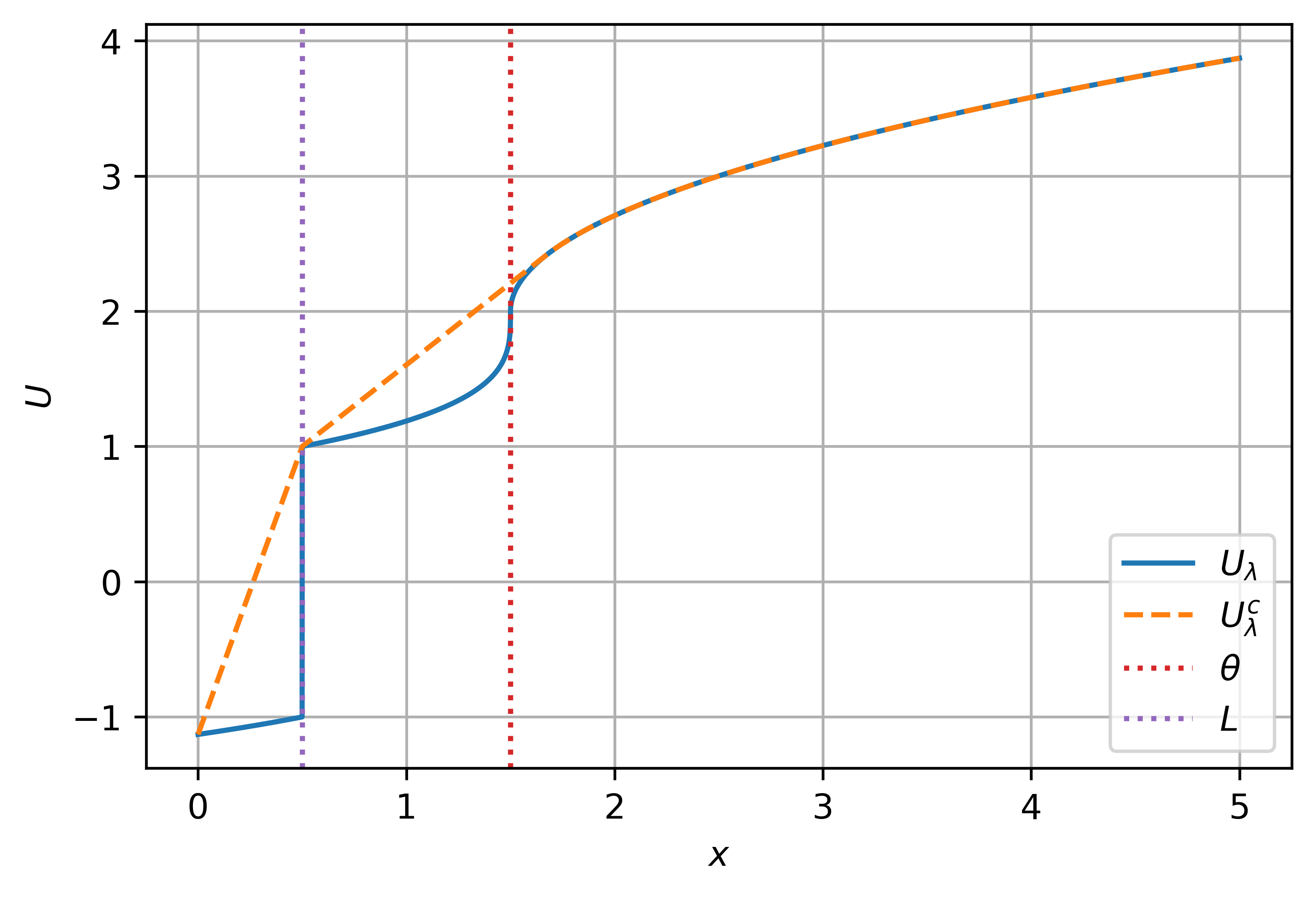}
\caption{Two line segment concavified utility}
\end{subfigure}
\end{minipage}
\caption{Utility $U_\lambda$ and concavified utility $U_\lambda^c$: (a) $k_\lambda \leq U_1'(\tilde{z} - \theta)$ and (b)  $k_\lambda > U_1'(\tilde{z} - \theta)$.}
\label{fig_utility1}
\end{figure}
Now we consider the auxiliary stochastic control problem with fixed $\lambda$: 
\begin{equation}\label{util-1}
\left\{
\begin{array}{l}
\sup_{\pi\in {\cal{A}}}\mathbb{E}[U_{\lambda}^c(X^{\pi}(T))],\\
{\text{s.t.}}\ X^{\pi}(t)\ {\text{satisfies}}\ (\ref{xn}),\\
{\hat \mu}(t) \ {\text{satisfies}}\ (\ref{mu}).\\
\end{array}
\right.
\end{equation}
Denote the value function of (\ref{util-1}) by 
\begin{align} \label{eq_value}
u_{\lambda}^c(t,x,{\hat \mu})\defeq \sup_{\pi\in {\cal{A}}}\mathbb{E}[U_{\lambda}^c(X^{\pi}(T))|X^{\pi}(t)=x,{\hat \mu}(t)={\hat \mu}]
\end{align}
and  the constraint probability function $h$ by
\begin{align} \label{eq_constraint}
h_\lambda(t, x, \hat{\mu}) := \E[\ind_{X^{\pi^*, \lambda}(T) \geq L}|X^{\pi}(t)=x,{\hat \mu}(t)={\hat \mu}],
\end{align}
where $X^{\pi^*, \lambda}$ is the optimal state process of (\ref{util-1}).
The concavification  principle states that
	 problems (\ref{util}) and (\ref{util-1}) are equivalent (see \cite[Theorem 5.1]{rei13}) and the optimal solution for (\ref{util-1}) is the same as that for (\ref{util}). 
The HJB equation for problem (\ref{util-1}) is given by
\begin{equation}\label{hjbx}
\frac{\partial u_{\lambda}^c}{\partial t}+\sup_{\pi}
\left(	(rx+x\pi(\hat{\mu}-r))\frac{\partial u_{\lambda}^c}{\partial x}+\frac{1}{2}x^2\pi^2 \sigma^2 \frac{\partial^2 u_{\lambda}^c}{\partial x^2}+\frac{1}{2}\psi^2 \frac{\partial^2 u_{\lambda}^c}{\partial \hat\mu^2}+x\pi\sigma\psi \frac{\partial^2 u_{\lambda}^c}{\partial \hat\mu \partial x}\right)=0
\end{equation} 
with the terminal condition $u_{\lambda}^c(T,x,\hat{\mu})=U_{\lambda}^c(x)$. 
This is a nonlinear PDE with two state variables and is in general difficult to solve. We may use the dual method to solve it. The dual state process $Y$ is strictly positive and satisfies the following SDE:
\begin{equation}\label{dy}
dY(t)=-Y(t)rdt - Y(t) \sigma^{-1}(\hat{\mu}(t)-r)d\hat{W}(t), \quad Y(0)=y_0.
\end{equation}
Since there is no control in SDE (\ref{dy}), the dual problem is reduced to a simple evaluation of expectation of dual function at $Y(T)$, that is, 
\begin{equation}\label{dutils}
\left\{
\begin{array}{l}
\mathbb{E}[V_{\lambda}^c(Y(T))],\\
{\text{s.t.}}\ Y(t)\ {\text{satisfies}}\ (\ref{dy}),\\
{\hat \mu}(t) \ {\text{satisfies}}\ (\ref{mu}).\\
\end{array}
\right.
\end{equation}
The dual value function is defined by
\begin{align} \label{eq_dual_value}
v_{\lambda}^c(t,y,\hat{\mu}):=\mathbb{E}[V_{\lambda}^c(Y(T))|Y(t)=y,{\hat \mu}(t)=\hat{\mu}]
\end{align}
and  the dual constraint function by 
\begin{align} \label{eq_dual_con}
g_\lambda(t, y, \hat{\mu}) := \E\left[ \ind_{x^{*,\lambda}(Y(T)) \geq L} \middle| Y(t) = y, \hat{\mu}(t) = \hat{\mu} \right],  
\end{align}
where  $\ind_S$ is an indicator  that equals 1 if $S$ happens and 0 otherwise. 
By Feynman-Kac formula, we have 
\begin{equation}\label{dhjbx}
\frac{\partial v_{\lambda}^c}{\partial t}-ry\frac{\partial v_{\lambda}^c}{\partial y}+\frac{1}{2}y^2\sigma^{-2}(\hat{\mu}-r)^2\frac{\partial^2 v_{\lambda}^c}{\partial y^2}+\frac{1}{2}\psi^2\frac{\partial^2 v_{\lambda}^c}{\partial \hat{\mu}^2}-y\sigma^{-1}(\hat{\mu}-r)\psi \frac{\partial^2 v_{\lambda}^c}{\partial \mu \partial y}=0
\end{equation} 
with the terminal condition $v_{\lambda}^c(T,y,\hat{\mu})=V_{\lambda}^c(y)$.

The optimal terminal wealth $X^{*,\lambda}(T)$ for primal problem (\ref{util}) is given by Proposition (\ref{dong}), that is, 
	$X^{*, \lambda}(T)  = x^{*, \lambda}(Y(T))$ and $y_0$ is determined by the binding budget constraint $\mathbb{E}[X^{*,\lambda}(T) Y(T)]=x_0y_0$. The optimal wealth process $X^{*,\lambda}$ can be determined by $X^{*,\lambda}(t)=\mathbb{E}[X^{*,\lambda}(T)Y(T)/Y(t)|{{\cal{F}}_t^S}]$ for $t\in[0,T]$ and the optimal control process $\pi^{*,\lambda}$ is determined by the martingale representation theorem.

\section{Main Results} \label{sec_main}
\setcounter{equation}{0}
To express  $\mathbb{E}[V_{\lambda}^c(Y(T))]$ explicitly in terms of integral representation, we need to know
the joint distribution of $Y$ and ${\hat \mu}$, which is unknown. To address this,   we employ a measure change technique (\cite{xing25}). We introduce a new process $W^Q$ by 
\[
dW^Q(t)\defeq \frac{\hat{\mu}(t) - \mu^l}{\sigma}dt+d\hat{W}(t),
\] 
and a new probability measure by 
\[
\left.\frac{dQ}{dP}\right|_{{\cal{F}}_T^S}\defeq \exp \left(-\frac{1}{2}\int_0^T\left(\frac{\hat{\mu}(u) - \mu^l}{\sigma}\right)^2du-\int_0^T\frac{\hat{\mu}(u) - \mu^l}{\sigma}d\hat{W}(u)\right).
\]
By Girsanov's theorem, $W^Q$ is a standard Brownian motion under the new probability measure $Q$.
Under this new measure $Q$, the wealth process $X$ satisfies
\[
dX(t)=X(t)(r+\theta_l \sigma\pi(t))dt+X(t)\sigma \pi(t)dW^Q(t),
\]
where $\theta_l = \frac{\mu^l - r}{\sigma}$,
and the corresponding dual process 
\[
d{\cal{Y}}(t)={\cal{Y}}(t)\left(-rdt-\theta_ldW^Q(t)\right).    
\]
Let 
$$\Phi(t)\defeq \frac{\hat{\mu}(t) - \mu^l}{\mu^h - \hat{\mu}(t)}, \quad t \in [0,T].$$
 Note that $\Phi(t)$ is well defined as $\hat{\mu}(t)\in (\mu^l, \mu^h)$ a.s. for all $t\in [0,T]$ due to $\hat\mu(0) \in (\mu^l, \mu^h)$  (see \cite[Lemma 3.1]{Decamps2005}).   Then $\Phi(t)$ satisfies the SDE:
\[
d\Phi(t)=\Theta \Phi(t)dW^Q(t)
\]
with $\Phi(0)=\phi\defeq \frac{ \hat{\mu}(0) - \mu^l}{\mu^h -  \hat{\mu}(0)}$, where $\Theta=\frac{\mu^h-\mu^l}{\sigma}$. In addition, let 
\begin{equation} \label{F(t)}
F(t)\defeq \frac{1+\Phi(t)}{1+\phi},
\end{equation}
 then $F(t)$ satisfies the SDE:
$$dF(t)=\sigma^{-1}(\hat{\mu}(t) - \mu^l) F(t)dW^Q(t)$$
with $F(0)=1$, and
\begin{eqnarray*}
	\left.F(t)=\frac{dP}{dQ}\right|_{{\cal{F}}_t^S}.
\end{eqnarray*}
Applying Ito's lemma, we deduce that
\[
d(F(t)Y(t))=-rF(t)Y(t)dt-F(t)Y(t)\theta_ldW^Q(t)
\]
with $F(0)Y(0)=y_0$, 
which yields that ${\cal{Y}}(t)=F(t)Y(t)$ under measure $Q$.
We also get that 
\begin{equation}\label{phi}
\Phi(t)=\phi \exp\left\{\Theta W^Q(t)-\frac{1}{2}\Theta^2t\right\},
\end{equation}
and
\begin{equation}\label{cy}
{\cal{Y}}(t)=y_0\exp\left\{-\theta_l W^Q(t)-\left(r+\frac{1}{2}\theta_l^2\right)t
\right\}.
\end{equation}
Using the above observations, the  dual value function in (\ref{dutils}) becomes
$$ \mathbb{E}[V_{\lambda}^c(Y(T))]
=\mathbb{E}^Q[F(T)V_{\lambda}^c({\cal{Y}}(T)/F(T))]
=\mathbb{E}^Q\left[\frac{1+\Phi(T)}{1+\phi}V_{\lambda}^c\left({\frac{1+\phi}{1+\Phi(T)}\cal{Y}}(T)\right)\right]. 
$$
Since $\Phi$ and $\cal{Y}$ can be expressed in terms of $Q$-Brownian motion $W^Q$, we can write out 
the integral representation of the dual value function (\ref{eq_dual_value})  and the constraint function (\ref{eq_dual_con}) explicitly, that is, 
\begin{align}\label{vlaue1}
v^c_{\lambda}(t,y,\hat{\mu})&=\int_{\mathbb{R}}\Psi\left(t, x, \hat{\mu}\right) V_\lambda^c\left(\frac{y \exp\left\{-\theta_lx-\left(r+\frac{1}{2}\theta_l^2\right)(T-t)\right\}}{\Psi(t, x, \hat{\mu})}\right)p(t, x)dx,\\
g_{\lambda}(t,y,\hat{\mu})&=\int_{\mathbb{R}}\Psi\left(t, x, \hat{\mu}\right) \ind_{x^{*,\lambda}\left(\frac{y \exp\left\{-\theta_lx-\left(r+\frac{1}{2}\theta_l^2\right)(T-t)\right\}}{\Psi(t, x, \hat{\mu})}\right)\geq L}p(t, x)dx, \label{eq_dual_con_exp}
\end{align}
where 
\begin{align*}
\Psi(t, x, \hat{\mu}) & \defeq \frac{1+\phi \exp\left\{\Theta x-\frac{1}{2}\Theta^2 (T-t)\right\}}{1+\phi}, \\
p(t, x) & \defeq \frac{1}{\sqrt{2\pi(T-t)}}\exp\left\{-\frac{x^2}{2(T-t)}\right\}.
\end{align*}


Condition (\ref{lamd}) can be written as
\begin{equation}\label{lamdq}
\left\{
\begin{array}{l}
\mathbb{E}^Q[F(T)\mathbb{I}_{\{X^{\pi^*,\lambda}(T)\geq L\}}]\geq 1-\varepsilon,\\
\lambda(\mathbb{E}^Q[F(T)\mathbb{I}_{\{X^{\pi^*,\lambda}(T)\geq L\}}]- 1+\varepsilon)=0.
\end{array}
\right.
\end{equation}
For a given $\lambda\geq 0$, the optimal terminal wealth is given by 
\begin{equation}\label{ox}
X^{\pi^*,\lambda}(T)=x^{*,\lambda}(Y(T)),
\end{equation}
where $x^{*,\lambda}$ is given by (\ref{r3-1-2})  or (\ref{r2-2-l}),   depending on the value $\lambda$, 
$$Y(T)={{\cal Y}(T)\over F(T)}= y_0(1+\phi)H(T),$$
 and
\begin{equation} \label{H(T)}
H(T)\defeq \frac{\exp\{-\theta_l W^Q(T)-(r+\frac{1}{2}\theta_l^2)T\}}{1+\phi \exp\left\{\Theta W^Q(T)-\frac{1}{2}\Theta^2T\right\} }.
\end{equation}
Additionally, $y_0$ is determined by the binding budget constraint $\mathbb{E}[X^{*,\lambda}(T) Y(T)]=x_0y_0$, that is,
\begin{equation}\label{bc}
\mathbb{E}^Q[F(T)X^{\pi^*,\lambda}(T)(1+\phi)H(T)]=x_0.
\end{equation}
Combining  (\ref{lamdq}) and (\ref{bc}), we can derive solutions $(y_0,\lambda^*)$.
We next state the main theorem on the existence and uniqueness of the Lagrange multiplier and the feasibility condition. 
\begin{theorem} \label{thm_feasible}
Let $x_0>0$ and $H^*_\varepsilon$ be the solution of the equation \begin{align}\label{eq_defH*}
\mathbb{E}^Q[F(T)\mathbb{I}_{\{H(T)\leq H^*_\varepsilon\}}]
=1-\varepsilon,
\end{align} where $F(T)$ and $H(T)$ are given in (\ref{F(t)}) and (\ref{H(T)}). Denote by 
\begin{equation} \label{x_epsilon}
\hat x_\varepsilon:=\mathbb{E}^Q[F(T)L\mathbb{I}_{\{H(T)<H^*_\varepsilon\}}(1+\phi)H(T)].
\end{equation}
   Then the following results hold.
\begin{enumerate}
\item If 	$x_0>\hat x_\varepsilon$, then there exists a unique $\lambda^*\geq 0$ such that $X^{\pi^*,\lambda^*}(T)$ in (\ref{ox})  is the optimal solution to problem (\ref{uo}).
\item If $x_0 = \hat x_\varepsilon$, then there is only one solution
$X^{\pi^*,\lambda^*}(T)=L\mathbb{I}_{H(T)<H^*_\varepsilon}$ a.s..
\item If $x_0< \hat x_\varepsilon$, then  
 problem (\ref{uo}) is infeasible, that is, condition  (\ref{lamd}) is not satisfied.
 \end{enumerate}
\end{theorem}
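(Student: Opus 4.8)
The plan is to parametrise the relaxed problem (\ref{util}) by the Lagrange multiplier $\lambda\ge0$, to track how the VaR probability of its optimiser responds to $\lambda$, and to recognise $\hat x_\varepsilon$ as the least initial wealth for which the constraint in (\ref{uo}) can be met at all. For fixed $\lambda$, Proposition \ref{dong} together with the concavification/dual construction of Section \ref{model} gives the candidate optimal terminal wealth $X^{\pi^*,\lambda}(T)=x^{*,\lambda}(Y(T))$ with $Y(T)=y_0(\lambda)(1+\phi)H(T)$, where $y_0(\lambda)$ is pinned down by the budget identity (\ref{bc}). Since $0<L<\theta$ and $x^{*,\lambda}(y)\in\{\theta+I_1(y),\,L,\,0\}$ with $\theta+I_1(y)>\theta>L$, formulas (\ref{r3-1-2}) and (\ref{r2-2-l}) give $\{x^{*,\lambda}(y)\ge L\}=\{y<\kappa(\lambda)\}$, where $\kappa(\lambda)=k_\lambda$ on the two--segment regime $k_\lambda>c_{\tilde z}$ and $\kappa(\lambda)=c_{\tilde z_0(\lambda)}$ on the one--segment regime, the two expressions agreeing at the transition value $\lambda_0$ where $k_{\lambda_0}=c_{\tilde z}$ (equivalently $\tilde z_0(\lambda_0)=\tilde z$, which one checks by substituting $x=\tilde z$ into (\ref{z0}) and using (\ref{zt})). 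Hence $G(\lambda):=\E^Q[F(T)\mathbb{I}_{\{X^{\pi^*,\lambda}(T)\ge L\}}]=\P(H(T)<\rho(\lambda))$ with $\rho(\lambda):=\kappa(\lambda)/((1+\phi)y_0(\lambda))$, and the theorem reduces to exhibiting $\lambda^*\ge0$ with $G(\lambda^*)\ge1-\varepsilon$ and $\lambda^*(G(\lambda^*)-1+\varepsilon)=0$; \cite[Lemma 2.3]{dong20} then promotes the associated $X^{\pi^*,\lambda^*}(T)$ to an optimal solution of (\ref{uo}).

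\textbf{Behaviour of $G$.} I would establish four facts about $\lambda\mapsto G(\lambda)$ on $[0,\infty)$. (i) $y_0(\lambda)$ is well defined, unique and continuous: the left side of (\ref{bc}) is continuous in $(y_0,\lambda)$, strictly decreasing in $y_0$ (as $x^{*,\lambda}$ is nonincreasing, strictly so on $\{y<c_{\tilde z}\}$), and, by the asymptotic elasticity hypothesis $\lim_{x\to\infty}xU_1'(x)/U_1(x)<1$, decreases from $+\infty$ to $0$. (ii) $G$ is nondecreasing: using optimality of $X^{\pi^*,\lambda_i}(T)$ for the utility $U_{\lambda_i}$ and the identity $\E[U_\lambda(X^{\pi^*,\lambda}(T))]=\E[U(X^{\pi^*,\lambda}(T))]+\lambda\,G(\lambda)$, testing each optimiser against the other and adding the two inequalities yields $(\lambda_2-\lambda_1)(G(\lambda_2)-G(\lambda_1))\ge0$. (iii) $G$ is continuous: $\lambda\mapsto y_0(\lambda)$ and $\lambda\mapsto\kappa(\lambda)$ are continuous, and $H(T)=h(W^Q(T))$ for an explicit smooth $h$ whose derivative vanishes at no more than one point, so $H(T)$ has no atoms. (iv) $G$ is strictly increasing as long as the relaxed optimiser still charges its continuation value $\theta+I_1(Y(T))$ with positive probability (which, under $x_0>\hat x_\varepsilon$, holds for every $\lambda$ with $G(\lambda)\le1-\varepsilon$): if $\rho$ were locally constant there, strict monotonicity of $\kappa$ on each regime would force $y_0(\lambda)$ strictly up, strictly lowering $\theta+I_1(y_0(\lambda)(1+\phi)H(T))$ on that positive--probability set while leaving the rest of $X^{\pi^*,\lambda}(T)$ unchanged, contradicting $\E^Q[F(T)(1+\phi)H(T)X^{\pi^*,\lambda}(T)]=x_0$.

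\textbf{Critical level and conclusion.} A bathtub/Neyman--Pearson comparison identifies $\hat x_\varepsilon$: since (\ref{eq_defH*}) means $\P(H(T)\le H^*_\varepsilon)=1-\varepsilon$ and $H(T)$ is atomless, every event $A$ with $\P(A)\ge1-\varepsilon$ satisfies $\E^Q[F(T)(1+\phi)H(T)\mathbb{I}_A]\ge\E^Q[F(T)(1+\phi)H(T)\mathbb{I}_{\{H(T)<H^*_\varepsilon\}}]=\hat x_\varepsilon/L$, with equality iff $A=\{H(T)<H^*_\varepsilon\}$ a.s. Applying this with $A=\{X^\pi(T)\ge L\}$ and using $X^\pi(T)\ge L\,\mathbb{I}_A$ (any admissible $\pi$ of finite utility has $X^\pi(T)\ge0$ a.s.) gives, for every admissible $\pi$ meeting the VaR constraint, $x_0=\E^Q[F(T)(1+\phi)H(T)X^\pi(T)]\ge\hat x_\varepsilon$, with equality forcing $X^\pi(T)=L\,\mathbb{I}_{\{H(T)<H^*_\varepsilon\}}$ a.s.; this is statement 3, and statement 2, since by market completeness that single surviving terminal wealth is attained and, being the only feasible point, is the optimum (it is the $\lambda\to\infty$ limit of the relaxed optimisers). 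For statement 1 I would compute $\lim_{\lambda\to\infty}G(\lambda)$: show $y_0(\lambda)\to\infty$ (or stays bounded, in which case $G(\lambda)\to1$, when $x_0\ge L\,\E^Q[F(T)(1+\phi)H(T)]$), so $\rho(\lambda)$ tends to the $\rho_\infty\in(0,+\infty]$ solving $L\,\E^Q[F(T)(1+\phi)H(T)\mathbb{I}_{\{H(T)<\rho_\infty\}}]=x_0$; since $x_0>\hat x_\varepsilon$ and $c\mapsto L\,\E^Q[F(T)(1+\phi)H(T)\mathbb{I}_{\{H(T)<c\}}]$ is strictly increasing, $\rho_\infty>H^*_\varepsilon$ and hence $\lim_{\lambda\to\infty}G(\lambda)=\P(H(T)<\rho_\infty)>1-\varepsilon$. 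Then: if $G(0)\ge1-\varepsilon$ take $\lambda^*=0$; otherwise $G(0)<1-\varepsilon<\lim_{\lambda\to\infty}G(\lambda)$, and the intermediate value theorem with the strict monotonicity of (iv) yields a unique $\lambda^*>0$ with $G(\lambda^*)=1-\varepsilon$. In either case $\lambda^*$ is the unique multiplier for which (\ref{lamdq}) holds, so $X^{\pi^*,\lambda^*}(T)$ solves (\ref{uo}) by \cite[Lemma 2.3]{dong20}.

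\textbf{Main obstacle.} The delicate step is the $\lambda\to\infty$ asymptotics: proving that $y_0(\lambda)$ diverges and that the continuation--region contribution $\E^Q[F(T)(1+\phi)H(T)(\theta+I_1(y_0(\lambda)(1+\phi)H(T)))\mathbb{I}_{\{y_0(\lambda)(1+\phi)H(T)<c_{\tilde z}\}}]$ vanishes, which needs the asymptotic elasticity bound to control $y\,I_1(y)$ as $y\downarrow0$ together with a dominated--convergence argument, and then identifying the limiting threshold $\rho_\infty$. A secondary point is checking that the strict monotonicity in (iv) genuinely holds throughout the range where the crossing $G(\lambda^*)=1-\varepsilon$ can occur (i.e. before the relaxed optimiser degenerates to a pure $L\mathbf{1}$-claim), which is what gives uniqueness of $\lambda^*$, and that $\kappa$, hence $G$, is continuous across the transition $\lambda_0$; the remaining items (continuity of $y_0(\cdot)$, atomlessness of $H(T)$, and the completeness/attainability claim behind statement 2) are routine.
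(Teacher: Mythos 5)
Your proposal is correct in substance but takes a genuinely different route from the paper. The paper argues constructively from the threshold $H^*_\varepsilon$: it splits into cases according to where $H^*_\varepsilon$ lies relative to $c_z/(y_0(1+\phi))$ and $c_{\tilde z}/(y_0(1+\phi))$, in the binding cases imposes $c_{\tilde z_0}=y_0(1+\phi)H^*_\varepsilon$ or $k_\lambda=y_0(1+\phi)H^*_\varepsilon$ so that $\lambda^*$ becomes an explicit function of $y_0$ via (\ref{l-1-A})--(\ref{l-2-A}), and then solves the single monotone budget equation (\ref{bc}) in $y_0$; the hypothesis $x_0>\hat x_\varepsilon$ enters only through $\lim_{y_0\to\infty}f(y_0)=\hat x_\varepsilon<x_0$, and the cases $x_0=\hat x_\varepsilon$ and $x_0<\hat x_\varepsilon$ are treated by contradiction within the same candidate family. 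You instead parametrise by $\lambda$, let $y_0(\lambda)$ be pinned down by (\ref{bc}), and run a monotone-crossing argument on $G(\lambda)=\P(X^{\pi^*,\lambda}(T)\geq L)=\P(H(T)<\kappa(\lambda)/((1+\phi)y_0(\lambda)))$: monotonicity by the two-optimizer comparison, continuity (your check that $k_{\lambda}=c_{\tilde z}$ iff $\tilde z_0=\tilde z$ is exactly what glues the two regimes), strict monotonicity from strict growth of $\kappa$ against the budget identity, and the intermediate value theorem. What your route buys: the Neyman--Pearson/bathtub bound identifies $\hat x_\varepsilon$ as the minimal cost of any claim meeting the VaR constraint, which gives statements 2 and 3 in a cleaner and in fact stronger form (genuine infeasibility over all admissible controls, and uniqueness of the feasible claim $L\mathbb{I}_{\{H(T)<H^*_\varepsilon\}}$ at $x_0=\hat x_\varepsilon$) than the paper's heuristic ``$y_0=\infty$'' limit, and it makes uniqueness of $\lambda^*$ transparent. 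What the paper's route buys: explicit formulas for $\lambda^*$ and the terminal wealth in each regime (hence the Lagrange algorithm), and it sidesteps the one step you leave as a sketch, namely the $\lambda\to\infty$ asymptotics (divergence of $y_0(\lambda)$, vanishing of the continuation-region cost using the elasticity condition $\lim_{x\to\infty}xU_1'(x)/U_1(x)<1$, and identification of $\rho_\infty$), which is the real work in your version, though it does go through. One small correction: for an arbitrary admissible $\pi$ the cost satisfies $\E^Q[F(T)(1+\phi)H(T)X^\pi(T)]\leq x_0$ (supermartingale inequality) rather than equality; the inequality direction you need in the feasibility bound is unaffected.
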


\begin{proof} We first discuss the case  $x_0>\hat x_\varepsilon$. 
		For a fixed $\lambda\geq 0$, the optimal terminal wealth $X^{\pi^*,\lambda}(T)$ is given by (\ref{ox}).
			If we can find a solution $(y_0,\lambda^*)$ to equations  (\ref{lamdq}) and (\ref{bc}), then $X^{\pi^*,\lambda^*}(T)$ is the optimal solution to problem (\ref{uo}). 
			
			 Case I: $H_{\varepsilon}^*\leq \frac{c_z}{y_0(1+\phi)}$. If we choose $\lambda=0$, then  	
			 $X^{\pi^*,\lambda}(T)$ is given by 
			 \begin{equation}\label{0x}
				X^{\pi^*, \lambda}(T)=(\theta+I_1(y_0(1+\phi)H(T)))\mathbb{I}_{\{H(T)\leq\frac{c_z}{y_0(1+\phi)}\}}.
			\end{equation}
			 We have 
\begin{eqnarray*}\mathbb{E}^Q[F(T)\mathbb{I}_{\{X^{\pi^*,\lambda}(T)\geq L\}}]
&=&\mathbb{E}^Q[F(T)\mathbb{I}_{\{H(T)\leq \frac{c_z}{y_0(1+\phi)}\}}]
\geq\mathbb{E}^Q[F(T)\mathbb{I}_{\{H(T)\leq H_{\varepsilon}^*\}}]
=1-\varepsilon.
\end{eqnarray*}
			The quantile constraint  (\ref{lamdq}) is  satisfied with  $\lambda^*=0$.
			We next show that there is a unique solution  $y_0$ to equation  (\ref{bc}).
			Denote by $f(y_0)\defeq \mathbb{E}^Q[F(T)X^{\pi^*,\lambda}(T)(1+\phi)H(T)]$. We can check that $f$ is  continuous, strictly decreasing,    and  $\lim_{y_0\rightarrow 0^+}f(y_0)=\infty, \lim_{y_0\rightarrow \infty}f(y_0)=0<x_0$, then  there exists a unique $y_0$ satisfying (\ref{bc}).
						
						 Case II: $\frac{c_{z}}{y_0(1+\phi)}<H_{\varepsilon}^*\leq \frac{c_{\tilde{z}}}{y_0(1+\phi)}$. 			In this case, if $\lambda^*=0$, then 
						 $$\mathbb{E}^Q[F(T)\mathbb{I}_{\{X^{\pi^*,\lambda}(T)\geq L\}}]=\mathbb{E}^Q[F(T)\mathbb{I}_{\{H(T)\leq \frac{c_z}{y_0(1+\phi)}\}}]
			<\mathbb{E}^Q[F(T)\mathbb{I}_{\{H(T)\leq H_{\varepsilon}^*\}}]=1-\varepsilon,$$
			which implies $\lambda^*=0$ is impossible as (\ref{lamdq}) is not satisfied. We must have  
			 $\lambda^*>0$ and the quantile constraint is binding, that is,  
			\begin{equation}\label{x-0-A}
			\mathbb{E}^Q[F(T)\mathbb{I}_{\{X^{\pi^*,\lambda}(T)\geq L\}}]=1-\varepsilon.
			\end{equation}
			Since
			$X^{\pi^*,\lambda}(T)=x^{*,\lambda}(Y(T))$ and $x^{*,\lambda}$ is given by (\ref{r3-1-2})  or (\ref{r2-2-l}), we next discuss its form.
			If $k_{\lambda}>c_{\tilde{z}}$, then
			\begin{equation}\label{x-4}X^{\pi^*,\lambda}(T)=(\theta+I_1(y_0(1+\phi)H(T)))\mathbb{I}_{\{H(T)<\frac{c_{\tilde{z}}}{y_0(1+\phi)}\}}+L\mathbb{I}_{\{\frac{c_{\tilde{z}}}{y_0(1+\phi)}\leq H(T)<\frac{k_{\lambda}}{y_0(1+\phi)}\}},
			\end{equation} and
		$$
			\mathbb{E}^Q[F(T)\mathbb{I}_{\{X^{\pi^*,\lambda}(T)\geq L\}}]=\mathbb{E}^Q[F(T)\mathbb{I}_{\{H(T)<\frac{k_{\lambda}}{y_0(1+\phi)}\}}]			>\mathbb{E}^Q[F(T)\mathbb{I}_{\{H(T)<H_\varepsilon^*\}}]
			=1-\varepsilon
			$$
				 as $k_\lambda>c_{\tilde{z}}\geq y_0(1+\phi)H_\varepsilon^*$, which is a contradiction to  (\ref{x-0-A}).
				 We must have $k_{\lambda}\leq c_{\tilde{z}}$, then 
						\begin{equation}\label{x-3}
				X^{\pi^*,\lambda}(T)=(\theta+I_1(y_0(1+\phi)H(T)))\mathbb{I}_{\{H(T)<\frac{c_{\tilde{z}_0}}{y_0(1+\phi)}\}}.
			\end{equation}
			To ensure (\ref{x-0-A}) holds, we have $c_{\tilde{z}_0}=y_0(1+\phi)H_{\varepsilon}^*$ and from (\ref{z0}) we define	 
			\begin{equation}\label{l-1-A}
			\lambda_1^*(y_0)\defeq  \tilde{z}_0U'_1(\tilde{z}_0-\theta)-U_1(\tilde{z}_0-\theta)-U_2(\theta).
			\end{equation}
To show $\lambda_1^*(y_0)>0$, define  $g(x):=xU'_1(x-\theta)-U_1(x-\theta)-U_2(\theta)$. Since $g$ is strictly decreasing and  $\tilde z_0<z$, we have  $\lambda_1^*(y_0)=g(\tilde{z}_0)>g(z)=0$. 
			The proof of existence and uniqueness of $y_0$ is similar to that in case I. 
			
		 Case III: $H_{\varepsilon}^*>\frac{c_{\tilde{z}}}{y_0(1+\phi)}$.
			In this case, similar to Case II, we have $\lambda^*>0$ and $X^{\pi^*,\lambda}(T)$ is given by (\ref{x-4}) or (\ref{x-3}). 
			If $X^{\pi^*,\lambda}(T)$ were given by (\ref{x-3}), we would have 
			$$\mathbb{E}^Q[F(T)\mathbb{I}_{\{X^{\pi^*,\lambda}(T)\geq L\}}]=\mathbb{E}^Q[F(T)\mathbb{I}_{\{H(T)<\frac{c_{\tilde{z}_0}}{y_0(1+\phi)}\}}]<\mathbb{E}^Q[F(T)\mathbb{I}_{\{H(T)<H_\varepsilon^*}]=1-\varepsilon$$ as $c_{\tilde{z}_0}\leq c_{\tilde{z}}<H_\varepsilon^*y_0(1+\phi)$, which is a contradiction to (\ref{x-0-A}).
			We must have $k_{\lambda}>c_{\tilde{z}}$ and $X^{\pi^*,\lambda}(T)$ is given by (\ref{x-4}).
						To ensure (\ref{x-0-A}) holds, we have $k_{\lambda}=y_0(1+\phi)H_{\varepsilon}^*$ and by the expression of $k_\lambda$ we define
			\begin{equation}\label{l-2-A}
			\lambda_2^*(y_0)=k_{\lambda}L+U_2(\theta-L)-U_2(\theta).
			\end{equation}
			Then 
			$$\lambda_2^*(y_0)>LU'_1(\tilde{z}-\theta)+U_2(\theta-L)-U_2(\theta)=\tilde{z}U'_1(\tilde{z}-\theta)-U_1(\tilde{z}-\theta)-U_2(\theta)=g(\tilde{z})>g(z)=0.$$
			The second equation holds as $U_1(\tilde{z}-\theta)+U_2(\theta-L)-(\tilde{z}-L)U'_1(\tilde{z}-\theta)=0$.
			
			To show the existance and uniqueness of $y_0$, define $f(y_0)\defeq \mathbb{E}^Q[F(T)X^{\pi^*,\lambda^*}(T)(1+\phi)H(T)]$. We can check that $f$  is continuous,  strictly decreasing, $\lim_{y_0\rightarrow 0^+}f(y_0)=\infty$ and $\lim_{y_0\rightarrow \infty}f(y_0)=\mathbb{E}^Q[F(T)L\mathbb{I}_{\{H(T)<H_{\varepsilon}^*\}}(1+\phi)H(T)]
			=\hat x_\varepsilon<x_0$, which shows there exists a unique $y_0$ to equation (\ref{bc}).

	We next discuss the case  $x_0=\hat x_\varepsilon$.  Combining (\ref{bc}), we have  
			\begin{equation}\label{eqL}
			\mathbb{E}^Q[F(T)L\mathbb{I}_{\{H(T)<H_{\varepsilon}^*\}}H(T)]=\mathbb{E}^Q[F(T)X^{\pi^*,\lambda^*}(T)H(T)].
			\end{equation}
				If $H_{\varepsilon}^*\leq \frac{c_z}{y_0(1+\phi)}$, then $\lambda^*=0$ and $X^{\pi^*}(T)$ is given by (\ref{0x}).  					We have 
	$$
			\mathbb{E}^Q[F(T)X^{\pi^*,\lambda^*}(T)H(T)]\geq \mathbb{E}^Q[F(T)z\mathbb{I}_{H(T)<\frac{c_z}{y_0(1+\phi)}}H(T)]			>\mathbb{E}^Q[F(T)L\mathbb{I}_{\{H(T)<H_{\varepsilon}^*\}}H(T)],
			$$
			 which is a contradiction to (\ref{eqL}). 
			Similarly, 	if $\frac{c_{z}}{y_0(1+\phi)}<H_{\varepsilon}^*\leq \frac{c_{\tilde{z}}}{y_0(1+\phi)}$, then 
			$X^{\pi^*,\lambda}(T)$ is given by (\ref{x-3}) with $c_{\tilde{z}_0}=y_0(1+\phi)H_{\varepsilon}^*$, 
			 this gives that 
			$$\mathbb{E}^Q[F(T)X^{\pi^*,\lambda^*}(T)H(T)]\geq \mathbb{E}^Q[F(T)\tilde{z}_0\mathbb{I}_{\{H(T)<H_\varepsilon^*\}}H(T)]
			>\mathbb{E}^Q[F(T)L\mathbb{I}_{\{H(T)<H_{\varepsilon}^*\}}H(T)],
			$$
						 which is again a contradiction to  (\ref{eqL}). 
			If $H_{\varepsilon}^*>\frac{c_{\tilde{z}}}{y_0(1+\phi)}$, then $X^{\pi^*,\lambda}(T)$ is given by (\ref{x-4}) with $k_{\lambda}=y_0(1+\phi)H_{\varepsilon}^*$, 
						this gives  that 
			$$\mathbb{E}^Q[F(T)X^{\pi^*,\lambda^*}(T)H(T)]\geq \mathbb{E}^Q[F(T)L\mathbb{I}_{\{H(T)<H_{\varepsilon}^*\}}H(T)].$$ 
			To ensure (\ref{eqL}) holds, we must have  $y_0=\infty$, then $k_\lambda=\infty$,  $\lambda=\infty$, and   $X^{\pi^*,\lambda^*}(T)=L\mathbb{I}_{H(T)<H_{\varepsilon}^*}$.
			
		We finally discuss the case  $x_0<\hat x_\varepsilon$. Suppose solutions exist. 
			If $H_{\varepsilon}^*\leq \frac{c_z}{y_0(1+\phi)}$, then $\lambda^*=0$ and $X^{\pi^*}(T)$ is given by (\ref{0x}).  					We have 
	$$		 \mathbb{E}^Q[X^{\pi^*,\lambda}(T){\cal Y}(T)]\geq \mathbb{E}^Q[L\mathbb{I}_{\{H(T)<H_{\varepsilon}^*\}}y_0(1+\phi)H(T)F(T)]=\hat x_\varepsilon y_0>x_0y_0,
		$$
			which is a contradiction. Similarly, 
			if $\frac{c_{z}}{y_0(1+\phi)}<H_{\varepsilon}^*\leq \frac{c_{\tilde{z}}}{y_0(1+\phi)}$ or $H_{\varepsilon}^*>\frac{c_{\tilde{z}}}{y_0(1+\phi)}$, then 
			$X^{\pi^*,\lambda}(T)$ is given by (\ref{x-3}) with $c_{\tilde{z}_0}=y_0(1+\phi)H_{\varepsilon}^*$
			or (\ref{x-4}) with $k_{\lambda}=y_0(1+\phi)H_{\varepsilon}^*$, respectively, we would again have
			$\mathbb{E}^Q[X^{\pi^*,\lambda}(T){\cal Y}(T)]>x_0y_0$, which is a contradiction. We conclude that there is no feasible solution if $x_0<\hat x_\varepsilon$. 
				\end{proof}

				\begin{remark}
										If $\varepsilon=0$, which requires $X^\pi(T)\geq L$ a.s., then $H^*_\varepsilon=+\infty$, $\hat x_\varepsilon=(1+\phi)\mathbb{E}^Q[LF(T)H(T)]$ and 
										$$X^{\pi^*,\lambda^*}(T)=(\theta+I_1(Y(T)))\mathbb{I}_{\{Y(T)<c_{\tilde{z}}\}}+L\mathbb{I}_{\{Y(T)\geq c_{\tilde{z}}\}}.$$ 
					If $\varepsilon=1$, which removes the quantile constraint, then $H^*_\varepsilon=0$, $\hat x_\varepsilon=0$ and 
					$$X^{\pi^*,\lambda^*}(T)=(\theta+I_1(Y(T)))\mathbb{I}_{\{Y(T)\leq c_z\}}.$$ 
				\end{remark}

\section{Algorithms} \label{sec_algo}
In this section we discuss three algorithms to solve  problem (\ref{util}) with conditions (\ref{lamd}). The first one is the exact algorithm based on Theorem \ref{thm_feasible}, the second the dual simulation method based on (\ref{dutils}), and the third the PINN method to solve the dual HJB equation (\ref{dhjbx}).

\subsection{Lagrange algorithm}
The proof of Theorem \ref{thm_feasible} is constructive, and from this we propose a numerical algorithm to solve the problem. We note that the values of $y_0$ are different in different $X^{\pi^*,\lambda^*}(T)$, which are denoted by $y_0^1$, $y_0^2$, $y_0^3$ in (\ref{0x}), (\ref{x-3}), and (\ref{x-4}) respectively. Now we give an algorithm to derive the optimal terminal wealth $X^{\pi^*, \lambda^*}(T)$ and the corresponding Lagrange multiplier  $\lambda^*$. The exact algorithm is the following.		

\begin{itemize}
\item[Step 0] Input initial wealth \( x_0 \), initial drift estimate $ \hat{\mu}(0) $, confidence level $ \varepsilon $, then compute $ z $ \eqref{z}, $ \tilde{z} $ \eqref{zt}, $ H^*_\varepsilon $ \eqref{eq_defH*}, and $\hat x_\varepsilon$ (\ref{x_epsilon}). If $x_0<\hat x_\varepsilon$, then the problem is infeasible, stop. If $x_0=\hat x_\varepsilon$, then set $X^{\pi^*,\lambda^*}(T)=L\mathbb{I}_{H(T)<H^*_\varepsilon}$, stop. Otherwise, go to Step 1. 
\item[Step 1] Compute $ y_0^3 $ from the budget constraint \eqref{bc} using $X^{\pi^*, \lambda}(T)$ from \eqref{x-4}. If $y_0^3 > \frac{c_{\tilde{z}}}{H^*_\varepsilon (1 + \phi)} $, then set 
 $X^{\pi^*, \lambda^*}(T) = X^{\pi^*, \lambda}(T)$ and $ \lambda^* = \lambda_2(y_0^3) $  from \eqref{l-2-A}, stop.  Otherwise, go to Step 2.
 \item[Step 2] Compute $ y_0^2 $ from \eqref{bc} using $X^{\pi^*, \lambda}(T)$ from \eqref{x-3}. 
If $y_0^2 > \frac{c_z}{H^*_\varepsilon (1 + \phi)} $, then 
set $X^{\pi^*, \lambda^*}(T) = X^{\pi^*, \lambda}(T)$ and $ \lambda^* = \lambda_1(y_0^2) $ from \eqref{l-1-A}, stop.   Otherwise, go to Step 3.
\item[Step 3] Set
$X^{\pi^*, \lambda^*}(T) $ from (\ref{0x})
and $ \lambda^* = 0 $, stop.
\end{itemize}



Since $\lambda^*$ can be expressed as a function of $y_0$ that is related to  $\varepsilon$, we write the optimal terminal wealth as $X^{\pi^*,\varepsilon}(T)$ instead of $X^{\pi^*,\lambda^*}(T)$  and characterize its form in terms of $\varepsilon$. 
Specifically,   denote by $\varepsilon^*=1-\mathbb{P}(H(T)\leq \frac{c_z}{y_0(1+\phi)})$ and $\varepsilon_*=1-\mathbb{P}(H(T)\leq \frac{c_{\tilde{z}}}{y_0(1+\phi)})$. Since $\tilde z<z$, we have $c_{\tilde z}>c_z$ and $\varepsilon_*\leq \varepsilon^*$. If $\varepsilon\geq \varepsilon^*$, then  $\mathbb{E}^Q[F(T)\mathbb{I}_{\{H(T)\leq \frac{c_z}{y_0(1+\phi)}\}}]\geq \mathbb{E}^Q[F(T)\mathbb{I}_{\{H(T)\leq H_\varepsilon^*\}}]=1-\varepsilon$, we have $H_\varepsilon^*\leq \frac{c_z}{y_0(1+\phi)}$ and $X^{\pi^*,\varepsilon}(T)$ is given by (\ref{0x}).
If $\varepsilon_*\leq \varepsilon< \varepsilon^*$, then $\mathbb{E}^Q[F(T)\mathbb{I}_{\{H(T)\leq \frac{c_{\tilde{z}}}{y_0(1+\phi)}\}}]\geq \mathbb{E}^Q[F(T)\mathbb{I}_{\{H(T)\leq H_\varepsilon^*\}}]>  \mathbb{E}^Q[F(T)\mathbb{I}_{\{H(T)\leq \frac{c_z}{y_0(1+\phi)}\}}]$, we have $H_\varepsilon^*\in (\frac{c_z}{y_0(1+\phi)},\frac{c_{\tilde{z}}}{y_0(1+\phi)}]$ and $X^{\pi^*,\varepsilon}(T)$ is given by (\ref{x-3}). 
If $\varepsilon<\varepsilon_*$, then $\mathbb{E}^Q[F(T)\mathbb{I}_{\{H(T)\leq H_\varepsilon^*\}}]>\mathbb{E}^Q[F(T)\mathbb{I}_{\{H(T)\leq \frac{c_{\tilde{z}}}{y_0(1+\phi)}\}}]$, we have $H_\varepsilon^*>\frac{c_{\tilde{z}}}{y_0(1+\phi)}$ and $X^{\pi^*,\varepsilon}(T)$ is given by (\ref{x-4}).

\subsection{Dual simulation algorithm}
First we consider running Monte Carlo simulations for the dual problem (\ref{dutils}). Due to the unconstrained, complete market nature of the problem, the dual problem is reduced to evaluation of an expectation, so Monte Carlo methods are well suited. The only difficulty is in converting to the primal problem, we need to find the dual start parameter $y^*$ associated to $x_0$, minimizing $v_\lambda^c(0, y, \hat{\mu}(0)) + x_0 y$, or equivalently solving $\partial_y v_\lambda^c(0, y, \hat{\mu}(0)) + x_0  = 0$. Combined with the need to find the Lagrange multiplier $\lambda^*$ leads to a coupled optimization problem to solve. Optimization over $y$ is very easy, as we can evaluate $\partial_y v_\lambda^c(0, y, \hat{\mu}(0))$ again using simulation. We therefore find $y^*$ for a range of $\lambda$ values, then can find the right $\lambda$ to solve the constrained problem. 

Fix a sample size $M \in \N$ and discretization size $N \in \N$. Let $h = \frac{T}{N}$ be the step size of the corresponding discretization on $[0, T]$.  Define the following Monte Carlo function
\begin{align}\label{eq_dual_mc}
\text{MC}^d(y, \lambda) = \frac{1}{M}\sum_{i = 1}^{M} V_\lambda^c\left(y \zeta^i_N\right),
\end{align}
where
\begin{align} \label{eq_dual_sim} \begin{split}
\zeta^i_{n+1} & = \zeta_n^i \left(1  - h r  - \sqrt{h}\sigma^{-1}(\hat{\mu}_n^i - r)  Z_n^i\right),\\
\hat{\mu}^i_{n+1} & = \hat{\mu}_n^i + \sqrt{h}\psi\left(nh, \hat{\mu}_n^i\right) Z_n^i,
\end{split}
\end{align}
for $i = 1, \ldots, M$ and $n = 0, \ldots, N - 1$, 
$\zeta^i_0  = 1$, $\hat{\mu}^i_0  = \hat{\mu}(0)$, 
and $Z^i_n$ are independent standard normal random variables. Given fixed $\lambda$, the optimal dual start $\hat{y}$ should (approximately, given sufficiently large $M$) minimize the function
$y \mapsto \text{MC}^d(y, \lambda) + x_0 y$ for $y > 0$. 
We therefore define the gradient descent update, for $k \in \N$
\begin{align*}
y_{k+1} & = y_k - \delta (\partial_y \text{MC}^d(y_k, \lambda) + x_0) \\
& = y_k - \delta \left(\frac{1}{M}\sum_{i = 1}^{M} \zeta^i_N(V^c_\lambda)'\left(y_k \zeta^i_N\right) + x_0\right)
\end{align*}
for some learning rate $\delta > 0$ and initial $y_0 > 0$. Assuming convergence of the algorithm to some $y^* = \lim_{k \to \infty}y_k$, we output the primal Monte Carlo simulation 
\begin{align*}
u_{\lambda}^c(t, x_0, \hat{\mu}) & = \frac{1}{M}\sum_{i = 1}^{M} U_\lambda^c\left(x^{*, \lambda}\left(y^* \zeta^i_N\right)\right), \\
h_{\lambda}(t, x_0, \hat{\mu}) & = \frac{1}{M}\sum_{i = 1}^{M} \ind_{x^{*, \lambda}\left(y^* \zeta^i_N\right) \geq L}.
\end{align*}
This method applies for only one value of $\lambda$. However, the algorithm can be run in parallel for multiple values $(\lambda_j)_{j=1}^J \subset [0, \infty)$. We can then interpolate between these points to produce $u$ and $h$ as functions of $\lambda$.

\begin{remark} \label{rem_infeasible} The training range may intersect with the infeasible region, points $x, \hat{\mu}$ at which the constrained problem (\ref{uo}) starting at $x, \hat{\mu}$ has no solution for some $\varepsilon \in [0, 1]$. However, the unconstrained problem (\ref{util-1}) is well defined and has a solution for any start points $x \geq 0, \hat{\mu} \in [\mu^l, \mu^h]$ and multiplier $\lambda \geq 0$. This is also the case in the next algorithm.
\end{remark}

\subsection{Dual PINN algorithm}

By Feynmann Kac, the dual value function (\ref{eq_dual_value}) satisfies a PDE (\ref{dhjbx}) which can be solved numerically. In particular, due to the simple nature of the dual problem, the dual PDE is linear, suggesting high effectiveness of numerical methods. The PINN method assumes a bounded state space, and samples points in this space. The dual value functions is defined to be a neural network, to which the PDE operator can be applied. This term is combined with the error at terminal time, to produce a loss function that can be minimized over the parameters of the neural network.

Typical convergence analysis for the PINN method requires PDEs with uniform Lipschitz continuity of the terminal function \cite{shin2020convergence}. For the HJB equation satisfied by the dual value function, the terminal function is only a Lipschitz continuous function of $y$ away from $y = 0$. Therefore, even in the simple case of the dual problem, our PDEs fall out of the provably convergent class of PDEs. However, we can still attempt this method and compare to the simulation approach. Just like the dual simulation method, we need to find the optimal dual start after finding the dual value function, in conjunction with the Lagrange multiplier.

First we define a neural network. Let ${\cal N} \in \N$ be fixed, this is the number of ``nodes'' in our network. Let $f \colon \R^d \times \Xi^{d, {\cal N}} \to \R$ for some input dimension $d \in \N$, where $\Xi^{d, {\cal N}} = \R^{d \times {\cal N}} \times \R^{{\cal N} \times {\cal N}} \times \R^{{\cal N} \times 1} \times \R^{\cal N} \times \R^{\cal N} \times \R$ is the parameter space. For $X \in \R^d$ and $\Theta = (A_1, A_2, A_3, b_1, b_2, b_3) \in \Xi^{d, {\cal N}}$ we define the neural network $f$ as 
\[f(X; \Theta) = L_3 \circ \eta \circ L_2 \circ \eta \circ L_1  (X), \qquad X \in \R^d\]
where $L_i(X) = A_i X + b_i$ are linear maps ($A$ is the ``weight'', $b$ is the ``bias'') for $i =1, 2, 3$ and $\eta$ is a non-linear ``activation'' function, applying the function $x \mapsto \tanh(x)$ element-wise. Each output of $\eta \circ L_i$, and the input $X$ are referred to as a layer. We have the input layer, the output layer $f(X; \Theta)$, and the two so-called ``hidden layers'' in between. In the sequel we will take $d = 4$.

The dual PINN method aims to find the value of $v^c_\lambda(t, y, \hat{\mu})$ and  for $t \in [0,T]$, $y \in \Y \subset [0,\infty)$, $\hat{\mu} \in \M \subset [\mu^l,\mu^h]$, and $\lambda \in \Lambda \subset [0,\infty)$. The function $v(t, y, \hat{\mu}, \lambda) \defeq v^c_\lambda(t, y, \hat{\mu})$ solves the PDE (\ref{dhjbx}). We can solve the HJB equation using the PINN method to find the function $v^c$. We can then solve the optimality condition for $y^* \defeq y^*(\lambda)$ at time 0 for a range of $\lambda$. We can then evaluate 
\[g_\lambda(0, y^*(\lambda), \hat{\mu}(0)) = \frac{1}{M}\sum_{i = 1}^M  \ind_{x^{*,\lambda}(y^*(\lambda)\zeta_N^i) \geq L}
 \]
 using Monte Carlo simulation (\ref{eq_dual_sim}). 
 

We initialize a neural network function $v(t, y, \hat{\mu} , \lambda; \Theta^v)$ depending on some parameters $\Theta^v$. We define the following loss functionals for any twice differentiable $v  \colon [0,T] \times \Y \times \M \times \Lambda \to \R$.
\begin{align}\begin{split} \label{eq_dual_pinn_loss}
\LLL(v) & = \sum_{i = 1}^{K_c} \bigg|\partial_t v(t^c_i, y^c_i, \mu^c_i, \lambda^c_i) - r y^c_i \partial_y v(t^c_i, y^c_i, \mu^c_i, \lambda^c_i)  + \half (y^c_i)^2\left|\sigma^{-1}(\mu^c_i - r)\right|^2 \partial_{yy} v(t^c_i, y^c_i, \mu^c_i, \lambda^c_i) \\
& \qquad \, + \half \psi(t^c_i, \mu^c_i)^2 \partial_{\mu \mu}v(t^c_i, y^c_i, \mu^c_i, \lambda^c_i)- \sigma^{-1}y^c_i(\mu^c_i - r\ind) \psi(t^c_i, \mu^c_i) \partial_{y \mu} v(t^c_i, y^c_i, \mu^c_i, \lambda^c_i)\bigg|^2, \\
\LLL_V(v) &= \sum_{i = 1}^{K_b} \big|v(T, y^b_i, \mu^b_i, \lambda^b_i) - V^c_{\lambda^b_i}(y^b_i)\big|^2, \\ 
\end{split}
\end{align}
where $t^c_i, y^c_i, \mu^c_i, \lambda^c_i$ for $i = 1, \ldots, K_c$ are generated uniformly from $[0,T] \times \Y \times \M \times \Lambda$ (``collocation points'') and $y^b_i, \mu^b_i, \lambda^b_i$ for $i = 1, \ldots, K_b$ are generated uniformly from $\Y \times \M \times \Lambda$ (``boundary points'').
We minimize the sum of these functions using gradient descent, starting at some arbitrary $\Theta^v_0$, evaluated element wise along $\Theta$
\begin{align*}
\Theta^v_{k+1} & = \Theta^v_k - \delta \partial_\Theta \left[\LLL(v(\cdot; \Theta^v_k)) + \LLL_V(v(\cdot; \Theta^v_k))\right],
\end{align*}
for $k \in \N$ and some learning rate $\delta > 0$. Assuming convergence, let $\Theta^v$ be the converged parameter set.
For the dual problem, we need to find the optimal dual start $y^*$ as well as the optimal Lagrange multiplier $\lambda^*$ for fixed $x_0$ and $\hat{\mu}$, which solve the coupled optimization problem
\begin{align} \begin{split} \label{eq_pinn_opt}
y^* & = \arg \min_{y > 0}\left\{ v_{\lambda^*}^c(0, y, \hat{\mu}) + x_0 y \right\}, \\
\lambda^* & = \arg \min_{\lambda \geq 0} \left\{\lambda \left|g_{\lambda}^c(0, y^*, \hat{\mu}) - (1 - \varepsilon)\right| + \max\left(0,(1 - \varepsilon) - g_{\lambda}^c(0, y^*, \hat{\mu}) \right)  \right\}.
\end{split}
\end{align}
To solve this problem, we first find $y^*$ solving 
$\partial_y v(t, y^*, \hat{\mu}, 0 ; \Theta^v) + x_0 = 0.$
We can then test if $g_0(t, y^*, \hat{\mu}) \geq (1 - \varepsilon)$, in which case $\lambda^* = 0$ and we are done. Otherwise, we are searching for $y^*, \lambda^* > 0$ such that $\partial_y v(t, y^*, \hat{\mu}, \lambda^* ; \Theta^v) + x_0 = 0$ and $g_{\lambda^*}(0, y^*, \hat{\mu}) - (1 - \varepsilon) = 0 $. We can use existing numerical optimization algorithms that take the functions $v(t, y, \hat{\mu}, \lambda ; \Theta^v)$ and $g_\lambda(t, y, \hat{\mu})$ and solve
\[|\partial_y v(0, y^*, \hat{\mu}, \lambda^* ; \Theta^v) + x_0|^2 + |g_{\lambda^*}(0, y^*, \hat{\mu}) - (1 - \varepsilon)|^2 = 0,\]
over the trained ranged $y, \lambda \in \Y \times \Lambda$. If we have either $y^* \in \partial \Y$ or $\lambda^* \in \partial \Lambda \backslash \{0\}$ on the boundary on the training region, then it is likely we have not found the true optimizers and we should increase the training range and repeat training. 

If we are only interested in finding $y^*$ and the corresponding value and constraint for fixed $t, x, \hat{\mu}, \lambda$, we only need to solve  
$\partial_y v(t, y^*, \hat{\mu}, \lambda ; \Theta^v) + x = 0.$
This can again be solved numerically, and we then output
\begin{align*}
u_{\lambda}^c(t, x, \hat{\mu}) & = v(t, y^*, \hat{\mu}, \lambda ; \Theta^v) + x y, \\
h_{\lambda}(t, x, \hat{\mu}) & = g_\lambda(t, y^*, \hat{\mu} ).
\end{align*}
In practise, we find a $y^*(\lambda)$ for each $\lambda$, and then find $\lambda^*$ such that ($\lambda^*, y^*(\lambda^*)$) satisfies (\ref{eq_pinn_opt}). We do this because we have an explicit representation for the neural network $v$ and it's derivative, so can easily optimize using it, unlike $g_\lambda$ which is evaluated via simulation. Also, once we have $y^*(\lambda)$, we simulate the optimal terminal wealth via $X^{\pi^*, \lambda}_N = x^{*, \lambda}(y^*(\lambda)\zeta_N)$ using (\ref{eq_dual_sim}).

\section{Numerical Examples} \label{sec_numerics}
In this section we solve the constrained problem (\ref{util}) with all methods. For the algorithms that solve the unconstrained problem,
given fixed $\varepsilon > 0$, we numerically solve the problem (\ref{util}) in two steps. Firstly, we set $\lambda = 0$ and find the values of $u_{0}^c(t, x, \hat{\mu})$ and $h_{0}(t, x, \hat{\mu})$ given by (\ref{eq_value}) and (\ref{eq_constraint}) respectively. If $h_{0}(t, x, \hat{\mu}) \geq 1 - \varepsilon$ then we output $u_{0}^c(t, x, \hat{\mu})$ and $\lambda^* = 0$ and we are done. Otherwise, we find $u_{\lambda}^c(t, x, \hat{\mu})$ and $h_{\lambda}(t, x, \hat{\mu})$ for a sufficiently large range of $\lambda > 0$. We then find $\lambda^*$ such that $h_{\lambda^*}(t, x, \hat{\mu}) = 1 - \varepsilon$.

\subsection{Data} \label{sec_config}
Unless otherwise mentioned, we take $T = 1.0$, $d = 1$, $\theta = 1.5$, $L = 0.9$, $x_0 = 1.0$, $r = 0.05$, $\sigma = 0.2$, $\lambda = 0.2$. We use the utility $U_1(x) = \sqrt{x}$, $U_2(x) = x^{0.3}$. With this configuration, problem (\ref{uo}) admits a unique solution for any $\varepsilon \in [0,1]$ by Theorem \ref{thm_feasible}. In this setting we have
\begin{align*}
\tilde{z}_0 & = \theta + \left(\sqrt{\left(U_2(\theta) + \lambda\right)^2 + \theta} - \left(U_2(\theta) + \lambda\right)\right)^2, \\
\tilde{z} & = \theta + \left(\sqrt{U_2(\theta - L)^2 + \left(\theta - L\right)} - U_2(\theta - L)\right)^2.
\end{align*}

In all algorithms, we use grids $\X = [0.2, 2.0]$, $\Y = [0.2, 2.0]$, $\Lambda = [0.0, 2.5]$, and $\M = [0.03, 0.1]$. We use a grid of $(\lambda_j)_{j = 1}^J \subset \Lambda$ equally spaced points in $\Lambda$ with $J = 51$ (we choose 50 subintervals over the region, then $J$ accounts for the midpoints, including both end points). For neural networks, we use a neural network structure with 2 hidden layers, with tanh activation function. For the primal simulation method the network layers have 10 hidden nodes, and for the dual PINN method they have 100 nodes. 
For dual simulation we take $N = 100$, $M  = 100000$, and run the algorithm for 200 steps with $\delta = 0.1$.
We run the PINN algorithm generating 2000 and 200 collocation and boundary points respectively, and use $\delta = 0.01$. We run until either 100000 iteration steps, or the loss function is below $0.00005$. The subsequent constrained optimization is performed using \texttt{Scipy}, using the neural network function as input into the equation. 
We solve the discrete distribution problem, taking $\psi(u)=\sigma^{-1}(u-\mu^l)(\mu^h-u){}\ind_{\mu^l\leq u\leq \mu^h}$ to facilitate comparison with the exact algorithm.
The neural networks are implemented using \texttt{Tensorflow} and parameters are optimized using the \texttt{ADAM} algorithm. Code to implement the algorithms can be found in \texttt{https://github.com/Ashley-Davey/ML-For-Quantile}.


\subsection{Numerics}

Table \ref{tab_results} displays some statistics for specific values of $\varepsilon$. By inverting the primal or dual constraint functions, we compute the value and other information for a single value of $\varepsilon$. We give the values of $\lambda^*$ and $y^*$, along with the concavified problem value $u^c_{\lambda^*}(0, x_0, \hat{\mu}(0))$ \eqref{eq_value}, the ``true value'' $u(0, x_0, \hat{\mu}(0)) \defeq \mathbb{E}[U(X^{\pi^*, \lambda^*}(T))|X^{\pi^*, \lambda^*}(0)=x,{\hat \mu}(0)={\hat \mu}]$ as a Monte Carlo evaluation of the utility at the outputted optimal state $X^{\pi^*, \lambda^*}(T)$ of the Lagrange problem with optimal Lagrange multiplier, and the probability that this state matches the lower limit and 0 exactly.

\begin{table}
\centering																
\begin{tabular}{|c|c|c|c|c|c|c|c|}\hline																
$\varepsilon$ 	&	Method	&	 $\lambda^*$	&	 $y^*$	&	    $u$	&	 $u^c_{\lambda^*}$	&	$\P(X^{\pi^*, \lambda^*}(T)=L)$	&	$\P(X^{\pi^*, \lambda^*}(T)=0)$	\\ \hline	\hline  
0	&	Dual sim	&	1.65	&	1.883	&	-0.561	&	1.086	&	0.746	&	0.003	\\ \hline	
0	&	Dual pinn	&	1.7	&	1.911	&	-0.601	&	1.13	&	0.782	&	0	\\ \hline		
0	&	Lagrange	&	1.659	&	1.885	&	-0.564	&	1.095	&	0.751	&	0	\\ \hline	\hline
0.1	&	Dual sim	&	1.453	&	1.795	&	-0.411	&	0.898	&	0.5	&	0.102	\\ \hline	
0.1	&	Dual pinn	&	1.464	&	1.806	&	-0.435	&	0.91	&	0.52	&	0.1	\\ \hline		
0.1	&	Lagrange	&	1.452	&	1.794	&	-0.411	&	0.896	&	0.5	&	0.1	\\ \hline	\hline
0.35	&	Dual sim	&	0.478	&	1.214	&	-0.095	&	0.216	&	0	&	0.35	\\ \hline	
0.35	&	Dual pinn	&	0.614	&	1.295	&	-0.114	&	0.31	&	0	&	0.35	\\ \hline		
0.35	&	Lagrange	&	0.483	&	1.216	&	-0.095	&	0.219	&	0	&	0.35	\\ \hline	\hline
1	&	Dual sim	&	0	&	0.946	&	-0.086	&	-0.085	&	0	&	0.395	\\ \hline	
1	&	Dual pinn	&	0	&	0.94	&	-0.046	&	-0.084	&	0	&	0.374	\\ \hline		
1	&	Lagrange	&	0	&	0.945	&	-0.085	&	-0.085	&	0	&	0.395	\\ \hline	
\end{tabular}						
\caption{Various statistics for different values of $\varepsilon$, applied at $t = 0$, $x_0 = 1.0$ and $\hat{\mu}(0) = 0.07$.}
\label{tab_results}
\end{table}

Figure \ref{fig_full} shows the results for the constrained problem for all methods. For each method, we find the value and constraint function for a range of $\lambda$, and take the (right) inverse the graph of $\lambda \mapsto h_\lambda(0, x_0, \hat{\mu}(0))$ to get a mapping $\varepsilon \mapsto \lambda^*(\varepsilon)$ for $\varepsilon \leq \varepsilon_0 \defeq 1 - \P(X^{\pi^*, 0}(T) \geq L)$. For $\varepsilon > \varepsilon_0$, the solution of the constrained problem is equal to the solution at $\varepsilon = \varepsilon_0$ as the quantile constraint is non-binding. Figure \ref{fig_full} (a) shows the graph of $\varepsilon$ against Lagrange multiplier $\lambda^*$,  (b)  the corresponding primal value $\E[U^c_\lambda(X^{\pi^*, \lambda}(T))]$, and (c) the constraint probability $\P(X^{\pi^*, \lambda}(T) \geq L)$.   We run the algorithms 10 times and take an average. Figure \ref{fig_full} (b) is made up of two sections, the kink point is the point at which the concavified utility moves from one line segment to two, and the structure changes much faster in terms of $\varepsilon$. The simulation and PINN values appear to agree, with a slight gap between the two problems accounting for numerical error.
Figure \ref{fig_full} (c)  verifies that in the binding region $\varepsilon \in [0, \varepsilon_0]$ we have $\varepsilon = 1 - \P(X^{\pi^*, \lambda^*}(T) \geq L)$, and after this point we have $\P(X^{\pi^*, \lambda^*}(T) \geq L) = 1 - \varepsilon_0$ and $\lambda^* = 0$. The differences in the graph correspond to the differences in each algorithms' approximation of $\varepsilon_0$, which is where each graph of Figure \ref{fig_full} (a) hits the x-axis, or the value of $\P(X^{\pi^*, \lambda^*}(T)=0)$ when $\varepsilon = 1$ in Table \ref{tab_results}.

\begin{figure}[H] 
\centering
\begin{minipage}{.31\textwidth}
\centering
\begin{subfigure}[b]{\textwidth}
\includegraphics[width=\textwidth]{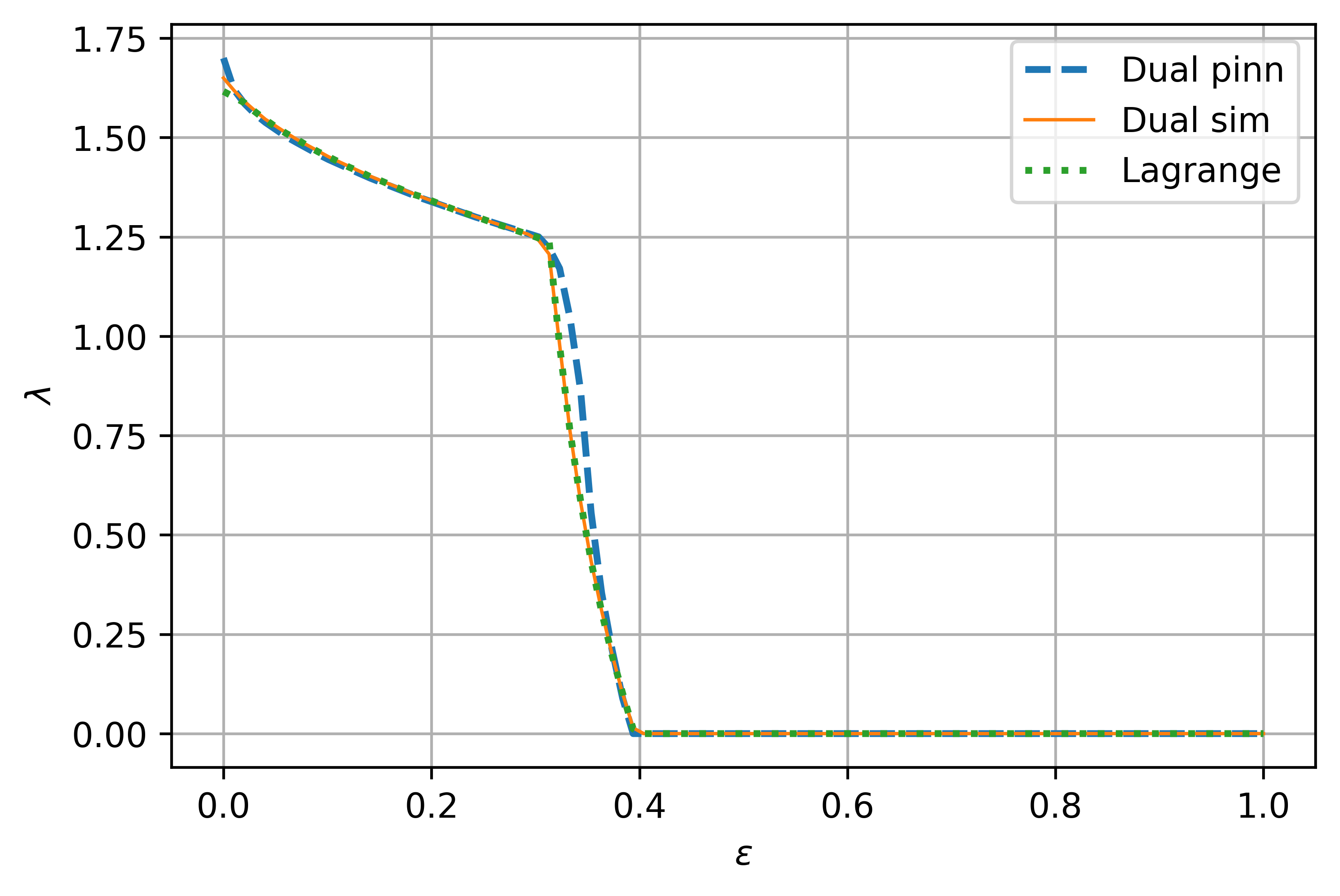}
\caption{$\varepsilon$ against $\lambda^*$. }
\end{subfigure}
\end{minipage}%
\;\;
\begin{minipage}{.31\textwidth}
\centering
\centering
\begin{subfigure}[b]{\textwidth}
\includegraphics[width=\textwidth]{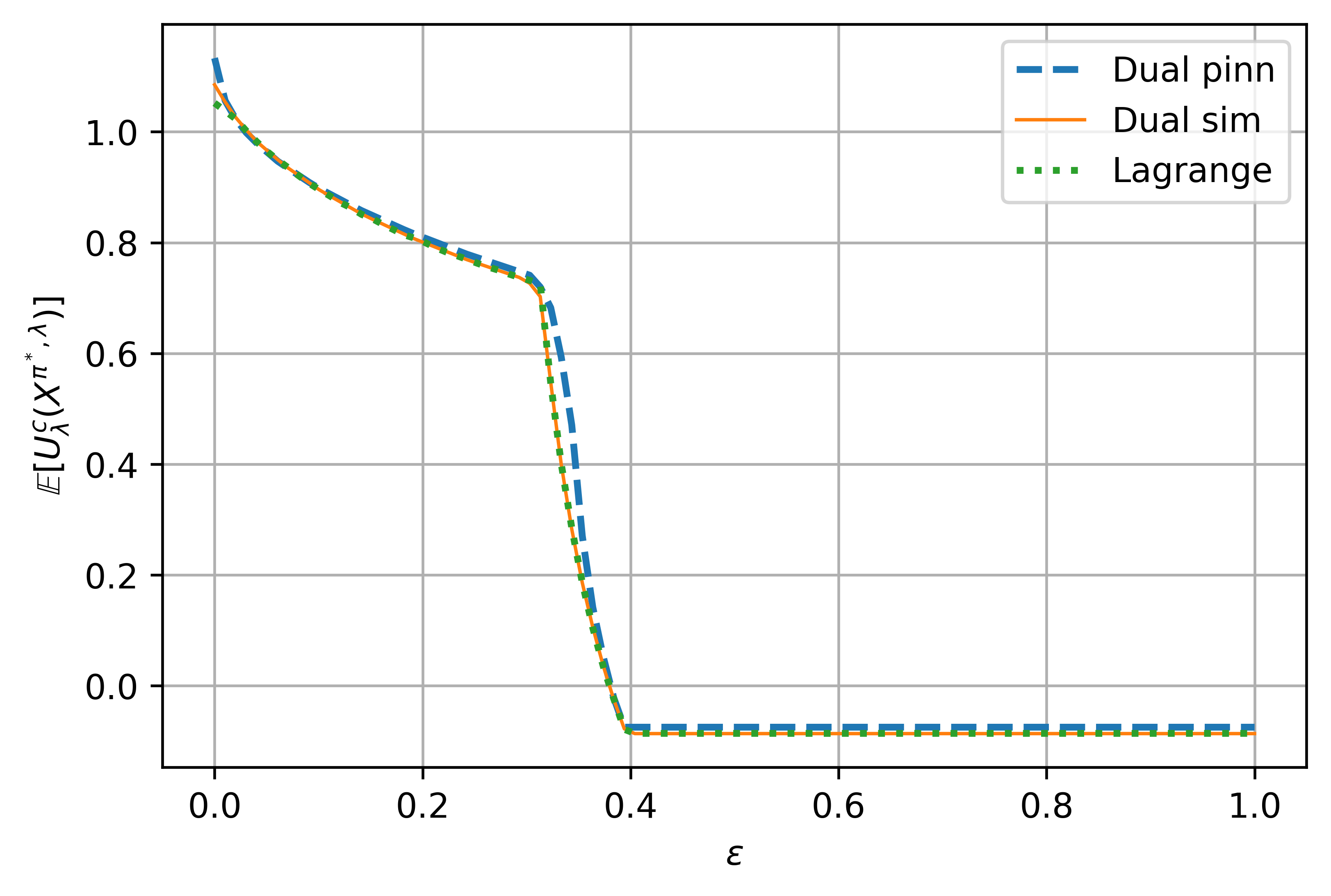}
\caption{$\varepsilon$ against $\E[U_\lambda^c(X^{\pi^*, \lambda^*}(T))]$.}
\end{subfigure}
\end{minipage}
\;\;
\begin{minipage}{.31\textwidth}
\centering
\centering
\begin{subfigure}[b]{\textwidth}
\includegraphics[width=\textwidth]{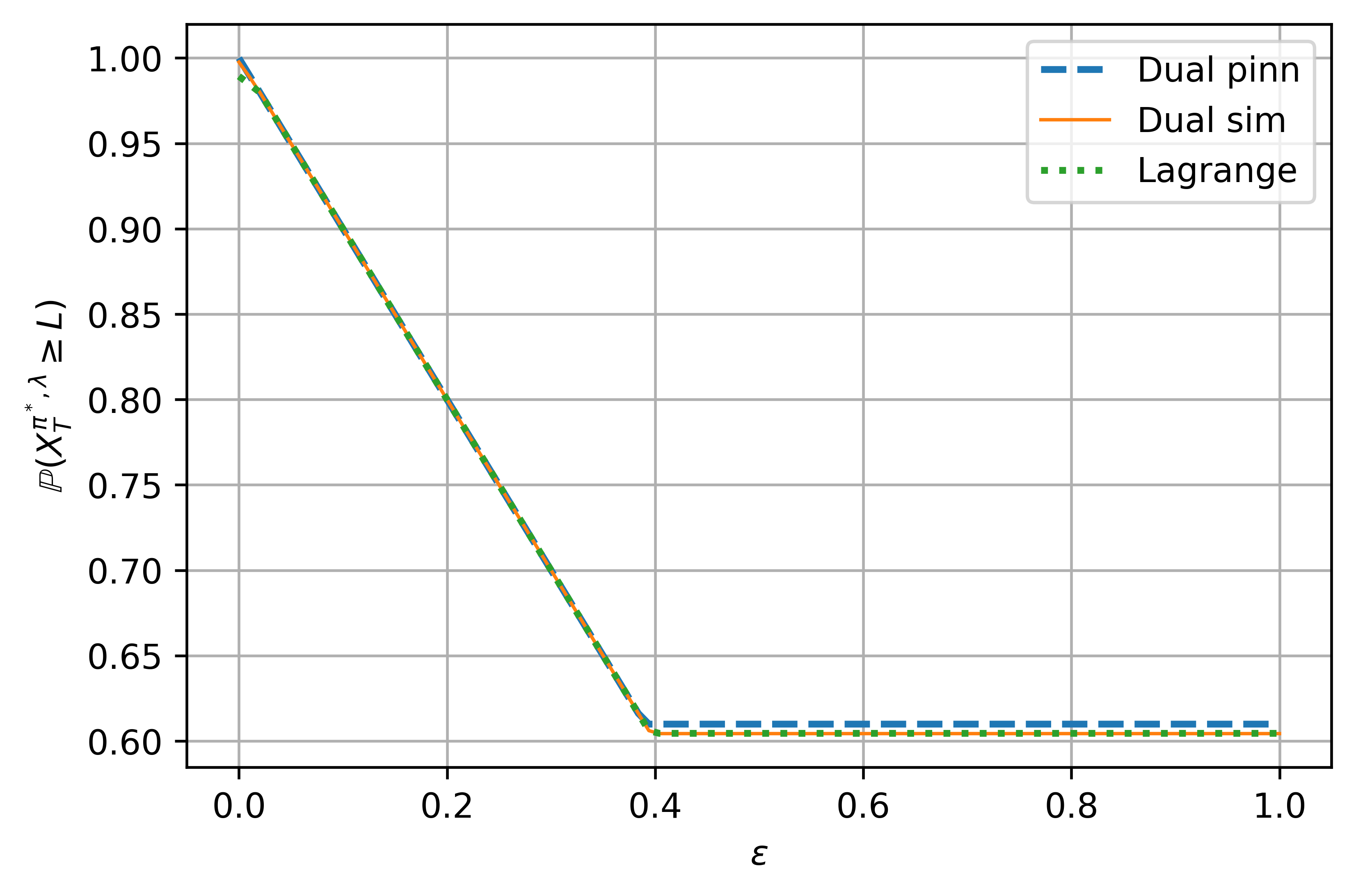}
\caption{$\varepsilon$ against $\P(X^{\pi^*, \lambda^*}(T) \geq L)$}
\end{subfigure}
\end{minipage}
\caption{Numerical results for the concavified problem.}
\label{fig_full}
\end{figure}


Figure \ref{fig_dist} shows the distribution of the optimal terminal wealth $X^{\pi^*, \lambda}(T)$ at terminal time when $\lambda \in \{0, 1.5, 2.5\}$. For the dual PINN and dual simulation methods, we generate this graph by simulating the dual state process and applying the function $x^{*, \lambda}$ given in Proposition \ref{dong}. Where there is an atom at $x = 0$ and $x = L$, we separate the distributions of each algorithm to make them clearer, but they all still refer to the same atom. We see the concavification principle applies, with the terminal state taking values at points
$x$ where $U_\lambda(x) = U^c_\lambda(x)$. The continuous section of the distributions have an exponentially decreasing tail in all cases, indicated by the linear segment of the log-scaled graphs.

\begin{figure}[H] 
\centering
\begin{minipage}{.31\textwidth}
\centering
\begin{subfigure}[b]{\textwidth}
\includegraphics[width=\textwidth]{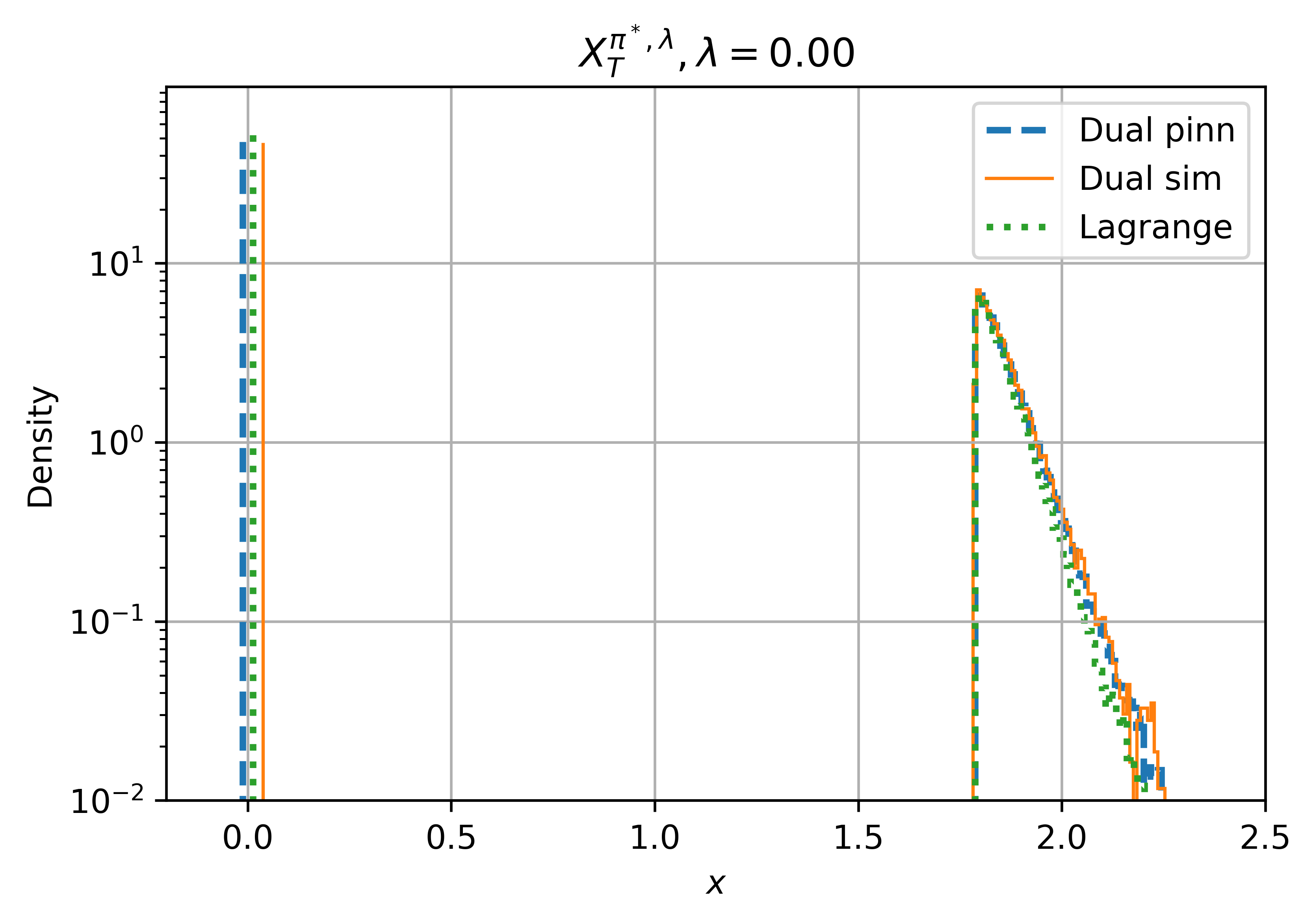}
\caption{$\lambda = 0$.}
\end{subfigure}
\end{minipage}%
\begin{minipage}{.31\textwidth}
\centering
\centering
\begin{subfigure}[b]{\textwidth}
\includegraphics[width=\textwidth]{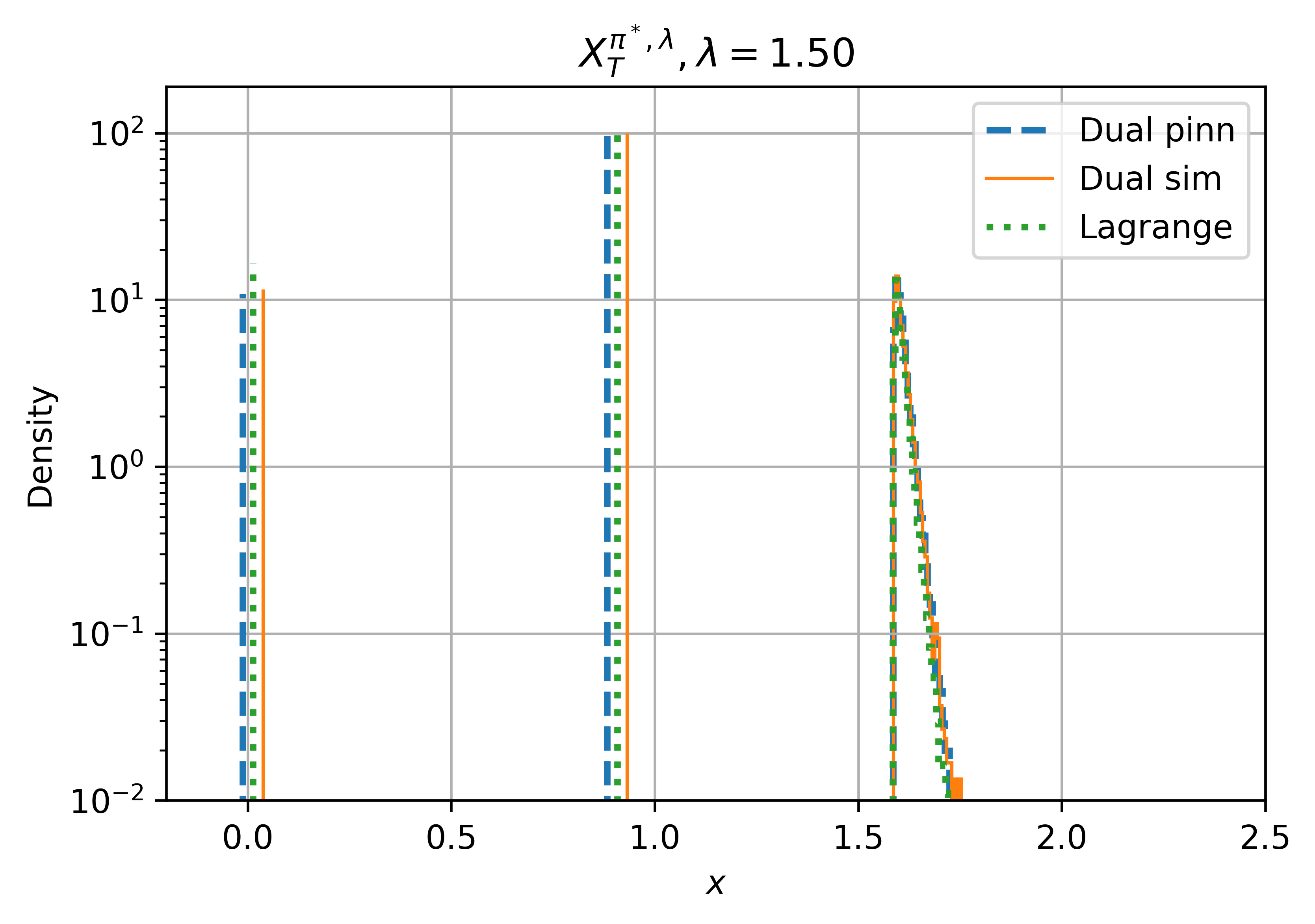}
\caption{$\lambda = 1.5$.}
\end{subfigure}
\end{minipage}
\begin{minipage}{.31\textwidth}
\centering
\centering
\begin{subfigure}[b]{\textwidth}
\includegraphics[width=\textwidth]{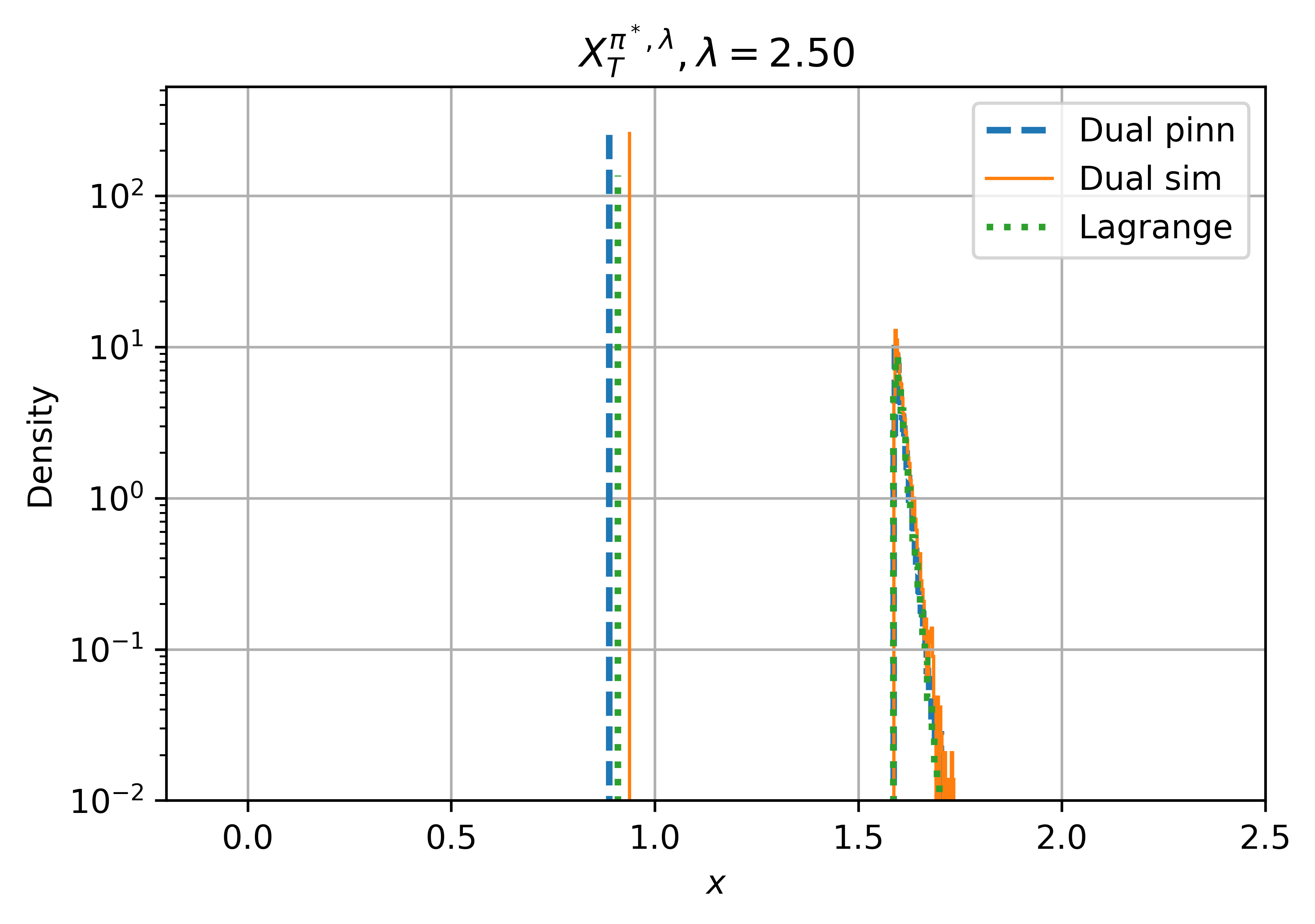}
\caption{$\lambda = 2.5$}
\end{subfigure}
\end{minipage}
\caption{ Distribution of $X^{\pi^*, \lambda}(T)$ for three different values of $\lambda$, log scaled.} 
\label{fig_dist}
\end{figure}

\subsection{Problem feasibility}

For fixed $\varepsilon = 0.2$, we may compute the value of $\hat{x}_\varepsilon$ defined in Theorem \ref{thm_feasible} as $\hat{x}_\varepsilon \approx 0.66$. This is the minimum wealth needed to achieve $\P(X^{\pi^*, \lambda}(T) \geq L) = 0.8$ for some $\lambda \geq 0$. Figure \ref{fig_algo2} (a) plots  the constraint function against $\lambda$ for feasible $x = 0.73$, infeasible $x = 0.6$ and the threshold $x = \hat{x}_\varepsilon \approx 0.66$. We see that $\hat{x}_\varepsilon$ is exactly the point such that $\lim_{\lambda \to \infty} \P(X^{\pi^*, \lambda}(T) \geq L) = 1 - \varepsilon$, where $X^{\pi^*, \lambda}(T)$ maximizes the value function $u^c_\lambda$. For $x < \hat{x}$ the constraint function is lower than $ 1 - \varepsilon$ for all $\lambda$, so there is no solution to the constrained problem.
With the Lagrange multiplier method, the objective function of the  optimization problem is given by, for fixed $\varepsilon \in [0,1]$ 
\begin{align*}
 J(\pi, \lambda; t, x, \hat{\mu}) & \defeq \E\left[U(X^\pi(T)) + \lambda \left(\ind_{X^\pi(T)\geq L} - (1 - \varepsilon)\right)\middle|X^{\pi}(t)=x,{\hat \mu}(t)={\hat \mu}\right] \\
 & =  \E\left[U_\lambda^c(X^\pi(T))\middle|X^{\pi}(t)=x,{\hat \mu}(t)={\hat \mu}\right] - \lambda (1 - \varepsilon).
 \end{align*}
Prior to this, the final term $- \lambda (1 - \varepsilon)$ has been omitted as it does not play a role in the optimization of $J$ over $\pi$. Define the corresponding (unconstrained) full value function
\[\bar{u}^c_\lambda( t, x, \hat{\mu}) =  u^c_\lambda(t, x, \hat{\mu}) - \lambda (1 - \varepsilon),\]
where the optimal $\pi^*$ depending on $\lambda$ has been found. We use the dual simulation algorithm to plot this function at an infeasible $x = 0.6$, feasible $x = 0.8$, and the transition point $x = \hat{x} \approx 0.66$ at $t = 0$. This plot is given in Figure \ref{fig_algo2} (b). If a solution exists, then it is a saddle point of the function $\lambda \mapsto \bar{u}^c_\lambda( 0, x, \hat{\mu})$. In the feasible region there is a unique saddle point at the solution, in the infeasible region there is no solution, and in the midpoint there is a limiting saddle point at $\lambda^* = \infty$ which corresponds to the only feasible wealth $X^{\pi^*, \infty}(T) \defeq L \ind_{H_T < H_\varepsilon^*}$.

\begin{figure}[H] 
\centering
\begin{minipage}{.31\textwidth}
\centering
\begin{subfigure}[b]{\textwidth}
\includegraphics[width=\textwidth]{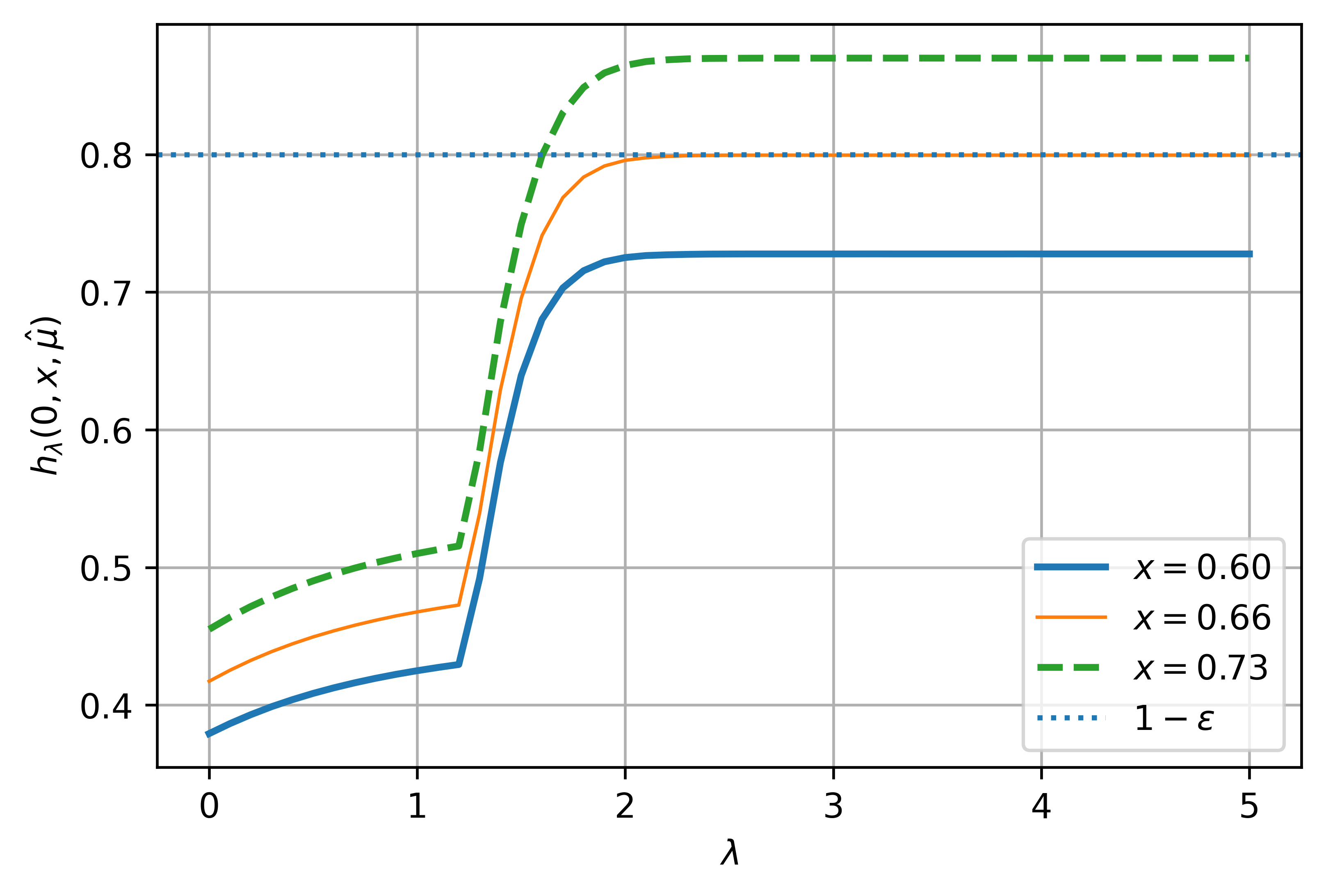}
\caption{$\lambda$ against constraint probability for 3 values of $x$.}
\end{subfigure}
\end{minipage}%
\;\;
\begin{minipage}{.31\textwidth}
\centering
\centering
\begin{subfigure}[b]{\textwidth}
\includegraphics[width=\textwidth]{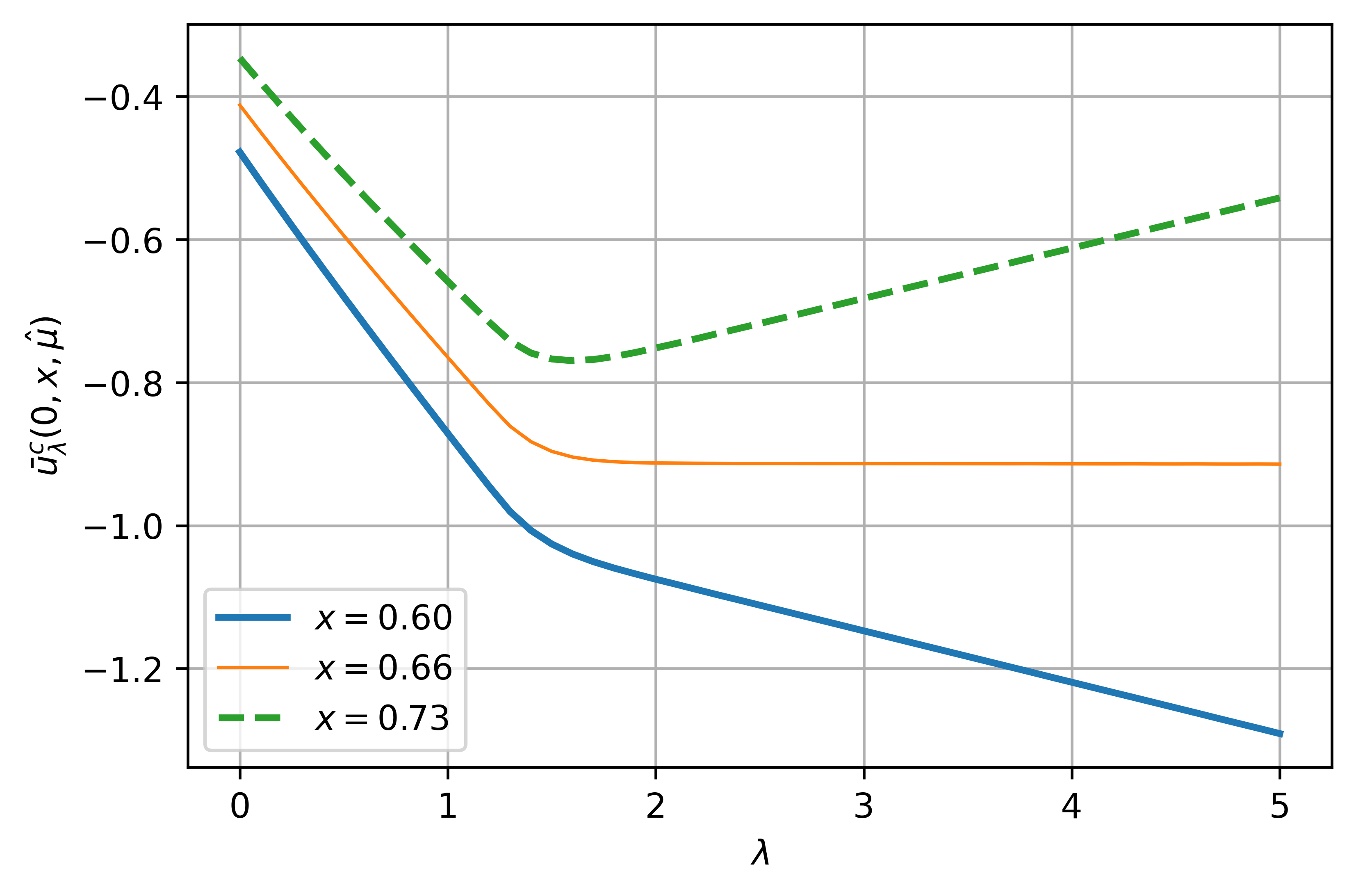}
\caption{$\lambda$ against full value function for 3 values of $x$.}
\end{subfigure}
\end{minipage}
\caption{Results for dual simulation and discrete algorithm in both feasible and infeasible regions.}
\label{fig_algo2}
\end{figure}

\section{Conclusions} \label{sec_conc}
\setcounter{equation}{0}

In this paper we solve S-shaped utility maximization incorporating both partial information and VaR constraint. We  convert the original unobservable model into an equivalent fully observable one with an additional filtered state process.  We then solve the problem in two steps,  first, solve an  unconstrained problem with the concavification principle and the dual method, and second, find the Lagrange multiplier and the initial dual state value for the constrained problem. We use a change of measure approach to overcoming the difficulty of the unknown joint distribution of the dual and filtered state processes and characterize the dual value function in a semi-closed integral form.  We identify a critical wealth level that makes the constrained problem admits a unique optimal solution or is infeasible. We also propose three algorithms (Lagrange, simulation, deep neural network) to numerically solve the problem and compare their performances with numerical examples. There remain many open questions. For example, if unobservable drift follows a general prior distribution, not necessarily a Bernoulli distribution, the current change of measure approach no longer works, how can we solve such a problem? We leave this and other questions for future research.

\bigskip\noindent{\bf Data Availability Statement.}
Data sharing not applicable - no new data generated, as the article describes entirely theoretical research.

\end{document}